\newcommand{\norm}[1]{\left\Vert#1\right\Vert}
\newcommand{\Real}{\mathbb R}
\newcommand{\mc}{\mathcal}
\newcommand{\D}{\mathcal{D}}
\newcommand{\x}{\mathbf{x}}
\renewcommand{\a}{\alpha}
\renewcommand{\b}{\beta}
\renewcommand{\l}{\lambda}
\newtheorem{thm}{Theorem}[section]
\newtheorem{prop}{Proposition}
\newtheorem{assm}{Assumption}
\newtheorem{defn}{Definition}
\title{Incentive Games and Mechanisms \\ for Risk Management}%\thanks{Grants or other notes
\author{Tansu Alpcan \\
             Technical University of Berlin, \\
              Deutsche Telekom Laboratories,  \\
              10587, Berlin, Germany \\
              Web: \textit{www.tansu.alpcan.org} \\
	      E-mail: \textit{alpcan@sec.t-labs.tu-berlin.de}   
}
\begin{document}
\maketitle

\begin{abstract}
Incentives play an important role in (security and IT) risk management of a large-scale organization
with multiple autonomous divisions. This paper presents an incentive mechanism design framework
for risk management based on a game-theoretic approach. The risk manager acts 
as a mechanism designer providing rules and incentive factors such as assistance or subsidies 
to divisions or units, which are modeled as selfish players of a strategic (noncooperative) game. 
Based on this model, incentive mechanisms with various objectives are developed that satisfy 
efficiency, preference-compatibility, and strategy-proofness criteria. 
In addition, iterative and distributed algorithms are presented, which can be implemented under information limitations such as the risk manager not knowing the individual units' preferences. An example scenario illustrates the framework and results numerically. The incentive mechanism design approach presented is useful for not only deriving guidelines but also developing computer-assistance systems for large-scale risk management.

\textbf{Keywords:} mechanism design, risk management, incentives in organizations
\end{abstract}

\section{Introduction} \label{sec:intro}

Security risk management is a multi-disciplinary field with both \textbf{technical and organizational dimensions}. On the technical side, complex and networked systems play an increasingly important role in daily business processes. Hence, system failures and security problems have direct consequences for organizations both monetarily and in terms of productivity \cite{moore_red}. It is therefore a necessity
for any modern organization to develop and deploy technical solutions for improving robustness of these complex information technology (IT) systems with respect to failures (e.g. in the form of redundancies)  and defending them against security threats (e.g. firewalls and intrusion detection/response systems).

However, even the best and most suitable technical solution will fail to perform adequately if it is not properly deployed
and supported organizationally. In order to be successful in risk management, an organization has to have proper 
information about its business processes and complex technical systems or ``observe'' them as well as be able
to influence their operation or ``control'' them \cite{alpcan-book}. In a large-scale organization these two necessary 
requirements, which may seem easy to satisfy at first glance, pose significant challenges. An important reason behind 
this issue, beside organizational structure, is the underlying incentive mechanisms.

Autonomous yet interdependent divisions or units of a large organization have often \textbf{individual objectives and incentives} that may 
not be as aligned in practice as the headquarters and executives wish. Each such unit may have a different
perspective on risk management which directly affects deployment of technical or organizational solutions. Misaligned incentives
also make observation and control of business and technical processes difficult for risk managers. Considering the complex
interdependencies in today's technology and business, such a misalignment in incentives is not a luxury even a large-scale
organization can effort.

Let us consider an \textbf{example scenario} of an enterprise deploying a new security risk management system
that entails information collection (observation), risk assessment (decision making), and mitigation (control).
In order for its successful operation, each division has to cooperate at each stage of its deployment and operation. 
At the deployment phase, the divisions have to provide accurate information on their business and networked systems.
During the operational phase, each division has to allocate manpower and resources for the proper operation of 
the system. All these can be accomplished only if the division has sufficient incentives for real cooperation. Otherwise,
the risk management system would simply fail as a result of bureaucracy, enterprise politics, and delaying tactics.

% Risks in networks and IT share some \textbf{characteristics} with those in finance and public health yet differ from them in other aspects. The similarity with the former is due to unexpected nature and significant financial consequences of security problems and 
% IT-related failures for organizations. Concurrently, similar to public health risks, they are often difficult to quantify. An important distinguishing
% characteristic of IT and network security is the underlying technological complexity that makes observation and control difficult. 
% At the same time, since IT networks are man-made systems, they can be potentially observed and controlled with increased accuracy or better
% defended against malicious attacks if sufficient resources are invested in an appropriate way. Thus, security risk management involves
% developing and deploying observational, assessment, and mitigation capabilities to networked systems that span multiple
% units of an organization.

\textbf{Game theoretic approaches} have significant potential in addressing the above described issues as well as
in risk analysis, management, and associated decision making \cite{riskbook1,crisis09,icc10jeff}.
The performance of manual and heuristic schemes degrades fast as the scale and complexity of the organization increases. 
Computer assistance in observation, decision making, and control of different risk management aspects is necessary to overcome this problem. Development of such computer-based support schemes, however, require quantitative representations and analysis. Game theoretic and analytical frameworks provide a mathematical abstraction which is useful for generalization of seemingly different problems, combining the existing ad-hoc schemes under a single umbrella, and opening doors to novel solutions. At the same time, such frameworks and the associated scientific methodology leads to streamlining of risk management processes and possibly more transparency as a consequence of increased observability and control \cite{alpcan-book}.

\textbf{Mechanism design} \cite{maskin1,lazarSemret1998,johari1}, which is a field of game theory, has been proposed recently as a way to model, analyze, and address  risk management problems \cite{alpcan-book}. It can be potentially useful especially in developing analytical frameworks for incentive mechanisms. Game theory in general provides a rich set of mathematical tools and models for investigating multi-person strategic decision making  where the players (decision makers) compete for limited and shared resources \cite{basargame,fudenberg}. Mechanism design studies ways of designing rules and structure of games such that their outcome achieve certain objectives. 

In the context of security risk management, the units of an organization can be modeled as players (independent decision makers)
in a \textbf{risk management game} since they share and compete for organizational resources. Each player decides on the allocation of
unit's resources, e.g. in terms of manpower and investments, to assess and mitigate perceived risks. The task of organization's risk 
manager (designer) is then influence the outcome of this game by imposing rules and varying its structure such that a satisfactory
amount of investment is made by each unit. Thus, the designer tries to optimize the risk management process 
from the entire organization's perspective within given resource constraints, e.g. budget.

This paper adopts a \textbf{game-theoretic approach} and presents a framework of incentive mechanism design for security risk management. The analytical framework studied can not only be used to derive guidelines for handling incentives in risk management but also to develop computer-assisted risk management systems. The \textbf{main contributions} of the paper include:
\begin{itemize}
 \item A strategic (noncooperative) game approach for analysis of incentives in (security and IT) risk management.
 \item An analytical incentive mechanism design framework where the designer does not have access to utilities of individual players of the underlying strategic game.
 \item Study of iterative incentive schemes which can be implemented under information limitations and their convergence analysis.
 \item A numerical analysis based on a scenario of a risk management system deployment.
\end{itemize}
A more detailed discussion clarifying these contributions and a comparison with existing literature will be provided in Section~\ref{sec:discussion}.

The rest of the paper is organized as follows. The next section provides an overview of the underlying mechanism design 
and game-theoretic concepts as well as the model adopted in this work. Section~\ref{sec:incentivemech}
presents incentive mechanism design for risk management. Section~\ref{sec:iterative} discusses
iterative incentive mechanisms and related distributed algorithms. An example use case scenario and related numerical analysis is presented in Section~\ref{sec:numerical},
which is followed by a  brief literature review  in Section~\ref{sec:discussion}. The paper concludes with a discussion
and concluding remarks in Section~\ref{sec:conclusion}.

% \begin{itemize}
%  \item Problem of security risk management in organizations: lack of information, misaligned incentives, and interdependencies
%  \item Quantitative models for risk management: analysis and decision making
%  \item Approach: game theory, security games, mechanism design \cite{RefJ}.
% \end{itemize}

\section{Game and Mechanism Model} \label{sec:model}

Consider an organization with $N$ \textit{autonomous units}, which act as independent decision makers, and a risk manager, 
which oversees the risk management task of the entire organization  (and is often a special organizational unit itself).
This generic organization may be a large-scale multi-national enterprise (divisions versus the risk manager at the headquarters), a government (government agencies versus central executives), or even an international organization (individual countries versus general secretary of the organization).

Adopting a game-theoretic approach, each autonomous unit can be modeled as a player of a 
\textbf{strategic (noncooperative) game} with the set of all players denoted as $\mc A$. The player $i \in \mc A$ independently decides on
its respective decision variable $x_i$, which represents allocation of limited resources such as monetary investments or manpower,
in accordance with own objectives. In majority of cases, the decisions of players affect each other due to constraints of the environment.
Thus, the players share and compete for resources  as part of this strategic game.

The risk manager $\D$, which is also called \textit{designer}\footnote{The terms risk manager and designer as well as (organizational) unit and player will be used interchangeably for the rest of the paper. }
 in the context of mechanism design, focuses on the aggregate outcome
of the strategic game and tries to ensure that the game satisfies some risk management objectives, e.g. information
collection for assessment or deployment of a new risk management solution. Unlike the players, the designer
achieves its objective only by \textit{indirect means such as providing additional incentives to players} in the form of incentive factors and
penalties or imposing rules. It is important to note that the risk manager cannot directly dictate individual actions of players, which is
a realistic assumption that holds for many types of civilian organizations. The interaction between risk manager (designer) and organizational units (players) is depicted in Figure~\ref{fig:mechdesign1}.
\begin{figure}[htbp]
  \centering
  \includegraphics[width=0.7 \columnwidth]{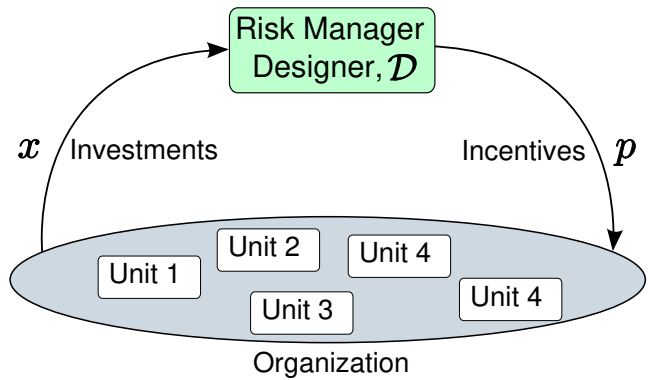}
  \caption{The interaction between the players (autonomous organizational units) of the underlying strategic game and the
  risk manager acting as mechanism designer, who observes players actions (investments) $x$ and provides additional
   incentives $p$.}
\label{fig:mechdesign1}
\end{figure}

The $N$-player strategic game, $\mc G$ is described as follows. Each player $i \in \mc A$ has a respective  
scalar \textbf{decision variable}\footnote{The analysis can be easily extended to multi-dimensional case. However, since
this would complicate the notation and readability without a significant conceptual contribution, 
this paper focuses on scalar decision variables.} $x_i$ such that 
$$x=[x_1,\ldots,x_N] \in \mc X \subset \Real^N, $$ 
where $\mc X$ is the convex, compact, and nonempty decision space of all players. 
The players make their decisions in accordance with their \textbf{preferences} modeled as customary by real valued utility functions 
$$  U_i (x) : \mc X \rightarrow \Real .$$ 
For analytical tractability, the player utility functions are chosen as continuous, differentiable, and strictly concave.
It is important to note here that \textit{players do not reveal their utilities (preferences) to the designer}.
Application of a similar utility function approach to risk management has been discussed in detail in \cite[Chap. 3]{riskbook1}, where the concave utilities are interpreted as ``risk averse''.

While each player gains a utility from its decisions (investments), these resources also have a cost, which can be often
expressed in monetary terms. We assume that that these costs are linear in the allocated resource, 
$ \b_i x_i $, where $\beta_i$ is the individual per unit cost factor. Each player $i$ aims to minimize its 
respective \textbf{cost function}
\begin{equation} \label{e:usercost}
  J_i(x)=  \b_i x_i - U_i (x) - p_i x_i ,
\end{equation}
where the linear term $p_i x_i$ represents the \textit{incentive factor} (or penalty if negative)
provided to the player by the designer $\D$.
Thus, player $i$  solves the optimization problem 
$$ \min_{x_i} J_i (x_i,x_{-i}) ,$$ 
by choosing an appropriate $x_i$ given the decisions of all players denoted by $x_{-i}$ such that $x \in \mc X$.
Formally, strategic game $\mc G$ is defined as:

\begin{defn} \label{def:game}
The strategic (noncooperative) game $\mc G$ is played among the set of selfish players, $\mc A$, of cardinality $N$, on the convex, compact, and non-empty decision space $\mc X \subset \Real^N$, where 
\begin{itemize}
 \item $x=[x_1,\ldots,x_N] \in \mc X$ denotes the actions of players
 \item $  U_i (x) : \mc X \rightarrow \Real$ denotes the utility function of player $i \in \mc A$
 \item $  J_i(x)=  \b_i x_i - U_i (x) - p_i x_i$ denotes the cost function of player $i \in \mc A$ for given parameters $b_i$ and $p_i$ $\forall i$,
\end{itemize}
such that each player $i$  solves its own optimization problem 
$$ \min_{x_i} J_i (x_i,x_{-i}) ,$$ 
by choosing an appropriate $x_i$ given the decisions of all players denoted by $x_{-i}$.
\end{defn}

The \textbf{Nash equilibrium} (NE) is a widely-accepted and useful solution concept in strategic games, where no player has an incentive to deviate from it while others play according to their NE strategies \cite{Nash50,Nash51}. The NE is at the same time the intersection point of players' best responses obtained by solving  their individual optimization problems. The NE of the game $\mc G$ in Definition~\ref{def:game} is formally defined as follows.
\begin{defn} \label{def:ne}
The Nash equilibrium of the game $\mc G$ in Definition~\ref{def:game}, is denoted by the vector $x^*=[x_1^*,\ldots,x_N^*] \in \mc X$ and defined as
$$ x_i^* := \arg \min_{x_i} J_i (x_i, x_{-i}^*)\;\; \;\forall i \in \mc A,$$
where $x_{-i}^*=[x_1^*,\ldots,x_{i-1}^*,x_{i+1}^*,\ldots, x_N^*]$. 
\end{defn}

If some special convexity and compactness conditions are imposed to the game $\mc G$, then it admits a unique NE solution, which simplifies mechanism and algorithm design significantly.  %TODO  CITE tansu-phd later
We refer to the Appendix A.1 as well as \cite{rosen,basargame,tansuphd} for the details and
an extensive analysis.

The risk manager (designer) $\D$ devises an  \textbf{incentive mechanism} $\mc M$, which can be represented by the mapping 
$\mc M: \mc X \rightarrow \Real^N$, and implemented through additional incentives (e.g. subsidies) in player cost functions, $p_i x_i$, above. 
Using incentive mechanism $\mc M$, the designer aims to achieve a certain risk management objective, which can be maximization of 
aggregate player utilities (expected aggregate benefit from risk-related investments) or an independent organizational target that 
depends on participation of all players such as deployment of a new risk management solution. 
These can be modeled using a \textbf{designer objective function} $V$ that quantifies the desirability of an outcome $x$ from 
the designers perspective. Formally, the function $V$ is defined as 
$$ V(x,U(x),p) : \mc X \rightarrow \Real.$$ 
Thus, the global optimization problem of the designer is 
$$\max_p  V(x,U_i(x),p) ,$$ 
which it solves by choosing the vector $p=[p_1, \ldots, p_N]$,
i.e. providing  incentive factors  to the players. Note that the designer objective $V$ (possibly) depends on player utilities $U=[U_1,\ldots,U_N]$, yet the designer does not have direct knowledge on them.
Furthermore, the risk manager may have only a limited budget $B$  to achieve its goal that leads to the additional constraint 
$$\sum_{i=1}^N p_i x_i \leq  B .$$

\textbf{Mechanism design}, as a field of game theory, studies designing the rules and structure of games such that their outcome achieve certain objectives~\cite{maskin1,lazarSemret1998,johari1,alpcan-infocom10}. Two criteria a mechanism has to satisfy has already been
described above. The player objective of minimizing own cost can also be called as \textit{preference-compatibility}. Likewise,
the designer objectives of maximizing $V$ or achieving a global goal can be interpreted as an \textit{efficiency} criterion. The third criterion arises from the
fact that the interaction between the designer and  players of the game (Figure~\ref{fig:mechdesign1}) may motivate the players to misrepresent their utilities to the designer. They can benefit from misrepresenting their utilities (exaggerating or diminishing the actual benefits of their investments) to receive higher  incentives. Therefore, mechanism design has a third objective called interchangeably \textit{strategy-proofness}, \textit{truth dominance}, or \textit{incentive-compatibility} in addition to the objectives of efficiency and preference-compatibility. All these three criteria
are summarized in the following table:
\begin{table}[htp]
\begin{center}
\caption{Three Criteria of Mechanism Design}
\begin{tabular}[t]{|l|l|}
\hline 
 \textit{Criterion} & \textit{Formulation in the Model} \\
\hline \hline 
Efficiency   & Designer objective \\ \hline 
Preference-  & Players minimizing own costs \\ 
compatibility & (NE as operating point) \\  \hline 
Strategy-Proofness & No player gains from cheating \\
% &  i.e. improve own cost by misrepresenting \\
 %& $J_i(x^*) < J_i(\tilde x) $ \\
\hline
\end{tabular}
\end{center} \label{tbl:mechdesign}
\end{table}

\subsection{Assumptions}

Taking into account the breadth of the field mechanism design, it is useful to clarify the underlying assumptions of the model studied
in this section. The \textbf{environment} where the players and designer interact is characterized by the following properties:
\begin{itemize}
 \item The players and designer operate with limited resources, e.g. under budget and manpower constraints.
 \item The organizational structure imposes restrictions on available information to players and communication between them. 
 \item The designer has no information on the preferences of individual players, but observes their actions and final costs.
\end{itemize}

The players share and compete for limited resources in the given environment under its information and communication constraints. 
The following assumptions are made on the \textbf{designer and players}:
\begin{itemize}
 \item The designer is honest, i.e. does not try to deceive players.
 \item Each player acts alone and rationally according to own self interests. 
 \item The players may try to deceive the designer by hiding or misrepresenting their own preferences.
 \item All players follow the rules of the mechanism imposed by the designer.
\end{itemize}

Implications of these assumptions and limitations of the presented model will be further discussed in Section~\ref{sec:discussion}.

\section{Incentive Mechanism Design} \label{sec:incentivemech}

This section presents two specific incentive mechanisms for risk management based on the model of the previous section.
In the first mechanism, $\mc M_1$, the risk manager (designer) aims to maximize the aggregate benefit from security investments
of units, which is the sum of player utilities. This objective is sometimes also called as ``social welfare maximization''.
The second mechanism, $\mc M_2$ represents a scenario in which the risk manager aims to align efforts of all units for 
deployment and operation of an organization-wide risk management solution. Both mechanisms (their iterative variants) satisfy the
criteria in Table~\ref{tbl:mechdesign} under specific conditions. The interaction between the designer
and players is visualized in Figure~\ref{fig:subsidymech1}.

\begin{figure}[htp]
  \centering
  \includegraphics[width=\columnwidth]{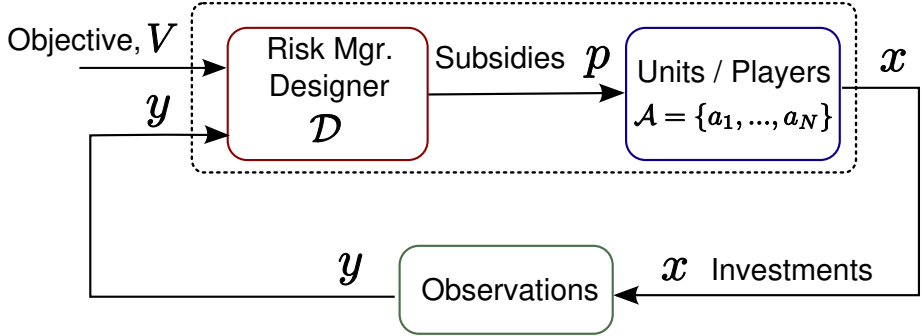}
  \caption{Interaction between risk manager (designer) and organizational units (players) as part of incentive mechanism design.}
\label{fig:subsidymech1}
\end{figure}

\subsection{Welfare maximizing mechanism}

The optimization problem $\min_{x_i} J_i (x)$ of player $i$ is a convex one and admits the unique solution
$$ x_i^* = \left( \dfrac{\partial U_i(x)}{\partial x_i}\right)^{-1} (\beta_i - p_i) ,$$
under the strict concavity and continuous differentiability assumptions on $U_i$ \cite{bertsekas2}.
Any such solution $x^*$ that solves all player optimization problems is by definition \textbf{preference-compatible}.

It is important to note that, if there was no incentive term, $p_i x_i$,  in player cost, each unit would
act according to self interest only resulting in a suboptimal result for the entire organization;
a situation sometime termed as  \textit{tragedy of commons}. The designer can prevent this by
providing a carefully selected incentive  scheme \cite{cdc09lacra,gamenetsne}.

The risk manager $\D$ objective in mechanism $\mc M_1$ is to maximize sum of player utilities, $\sum_i U_i(x_i)$. Considering
that under the assumptions of Section~\ref{sec:model} the risk manager does \textit{not} know these utilities makes
this goal paradoxical at first glance. However, the risk manager can actually achieve it in a carefully designed mechanism
where it deduces the needed parameters for the solution from the observed actions of players. 

Formally, the designer solves the constrained optimization problem
\begin{equation} \label{e:designerobj1}
 \max_x V(x) \Leftrightarrow \max_x \sum_i U_i (x) \text{ such that } \sum_i p_i x_i \leq B.
\end{equation}
The optimal solution to this constrained problem by definition satisfies the \textbf{efficiency criterion}.
The associated Lagrangian function is then
$$ L(x)=\sum_i U_i (x) + \l \left( B- \sum_i p_i  x_i \right)  ,$$
where $\l\geq 0$ is a scalar Lagrange multiplier \cite{bertsekas2}.
Under the concavity assumptions on $U_i$, this leads to
\begin{equation} \label{e:global1}
\dfrac{\partial L}{\partial x_i}=0  \Rightarrow \dfrac{1}{p_i}\sum_{j=1}^N \dfrac{\partial U_j(x)}{\partial x_i}= \l, \; \forall i \in \mc A,
\end{equation}
and the associated budget constraint\footnote{An underlying assumption here is that the risk manager (designer) utilizes
all of its budget, i.e. the constraint is active.} is
\begin{equation} \label{e:constraint1}
\dfrac{\partial L}{\partial \l}=0 \Rightarrow \sum_i p_i x_i=B.
\end{equation}

Meeting both the preference-compatibility and efficiency 
criteria requires alignment of player and designer optimization problems. This alignment
can be achieved by choosing the Lagrange multiplier $\l$ and player  incentive factors $p$
in such a way that
\begin{equation} \label{e:align1}
\dfrac{\b_i - p_i}{p_i} + \dfrac{1}{p_i}\sum_{j\neq i}\dfrac{\partial U_j(x)}{\partial x_i} = \l, \; \forall i \in \mc A,
\end{equation}
and
\begin{equation} \label{e:align2}
\sum_i p_i \left( \dfrac{\partial U_i(x)}{\partial x_i} \right)^{-1}( \b_i - p_i) =B. 
\end{equation}
Any solution to the set of $N+1$ nonlinear equations (\ref{e:align1})-(\ref{e:align2}) is by
definition a Nash equilibrium as it lies at the intersection of the player best responses. These results are summarized in the following proposition.

\begin{prop} \label{prop:m1}
 Any solution of the mechanism $\mc M_1$ described above obtained from (\ref{e:align1})-(\ref{e:align2}) is both player preference-compatible (based on the strategic game $\mc G$, given in Definition~\ref{def:game}) and efficient, i.e. maximizes $\sum_i U_i (x)$.
\end{prop}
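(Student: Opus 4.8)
The plan is to establish the two claimed properties separately and then to observe that the defining equations (\ref{e:align1})--(\ref{e:align2}) are nothing but the simultaneous statement of both sets of optimality conditions, so that a common solution inherits both. For preference-compatibility I would first note that each player's problem $\min_{x_i} J_i(x_i,x_{-i})$ is strictly convex in $x_i$: since $U_i$ is strictly concave and continuously differentiable while the remaining terms $\b_i x_i - p_i x_i$ are linear, $J_i$ is strictly convex in the $i$-th coordinate. Hence the stationarity condition $\partial J_i/\partial x_i = 0$, i.e. $\partial U_i(x)/\partial x_i = \b_i - p_i$, is both necessary and sufficient for $x_i$ to be player $i$'s unique best response to $x_{-i}$. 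A point at which this holds for every $i \in \mc A$ is by Definition~\ref{def:ne} a Nash equilibrium and therefore preference-compatible; this is exactly the best-response relation encoded in $x_i^* = (\partial U_i/\partial x_i)^{-1}(\b_i - p_i)$.

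For efficiency I would treat the designer's program (\ref{e:designerobj1}) as a concave maximization in $x$: the objective $\sum_i U_i$ is concave because each $U_i$ is, and for a fixed incentive vector $p$ the feasible set $\{x \in \mc X : \sum_i p_i x_i \leq B\}$ is the intersection of the convex set $\mc X$ with a half-space, hence convex. Because the budget constraint is affine, a constraint qualification holds and the Karush--Kuhn--Tucker conditions (\ref{e:global1})--(\ref{e:constraint1}) are necessary and sufficient for the global maximizer. The remaining work is purely algebraic: substituting the player stationarity relation $\partial U_i/\partial x_i = \b_i - p_i$ into the $j=i$ summand of (\ref{e:global1}) converts it, term by term, into (\ref{e:align1}), while substituting the best-response expression $x_i = (\partial U_i/\partial x_i)^{-1}(\b_i - p_i)$ into the active budget constraint (\ref{e:constraint1}) yields precisely (\ref{e:align2}). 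Running this chain backwards, any $(x,p,\l)$ solving (\ref{e:align1})--(\ref{e:align2}) together with the best-response relation reproduces the full KKT system, so the corresponding $x$ is the efficient allocation.

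I expect the main obstacle to be conceptual rather than computational. One must be careful that the designer is effectively optimizing over $x$ for a fixed $p$ while the players also determine $x$ through that same $p$, and argue that the alignment equations force the induced Nash equilibrium to coincide with the designer's optimum rather than merely share a critical point. This hinges on sufficiency on both sides---strict convexity of each $J_i$ for the players and concavity plus the affine constraint qualification for the designer---so that matching stationarity points genuinely matches global optima. I would also flag the standing assumption (the footnote to (\ref{e:constraint1})) that the budget constraint is active: this is what permits replacing the inequality by the equality (\ref{e:align2}), since a slack constraint would give $\l = 0$ and degenerate the alignment.
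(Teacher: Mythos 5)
Your proposal is correct and follows essentially the same route as the paper: the player first-order conditions certify preference-compatibility (Nash equilibrium), the designer's KKT conditions certify efficiency, and (\ref{e:align1})--(\ref{e:align2}) are exactly the designer's conditions with the player best responses substituted in, so a common solution satisfies both. You merely make explicit the sufficiency arguments (strict convexity of each $J_i$ in $x_i$, concavity of $\sum_i U_i$ with an affine budget constraint, and the standing assumption that the constraint is active) that the paper leaves implicit.
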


If the designer $\D$ wants to compute the  incentive factors $p$ directly by solving (\ref{e:align1})-(\ref{e:align2}), it needs to ask each individual player $i$ for its utility, more specifically 
$\partial  U_i(x) / \partial x_j$ $\forall j \in \mc A$. 
However, the players have now a motivation to misrepresent their utilities  to the designer in order to gain a larger
share of resources or  incentive factors. To see this, consider a cheating player $i$ reporting $\tilde U_i$ to
the designer instead of their true values. If the designer believes the player and solves  (\ref{e:align1})-(\ref{e:align2})
using these, then the resulting incentive factor $\tilde p$ will naturally be different from what it should have been, $p$. 
A selfish or malicious player can thus manipulate such a scheme, which by definition is not strategy-proof.
Note that, the risk manager has access to costs $\b_i x_i$ and actions $x_i$ of individual players, which
can be, for example, part of an organizational reporting process.

One way to address the issue of strategy-proofness is to devise additional schemes to  detect potential player misbehavior 
(for which players already have a motivation). This, however, brings an additional layer of overhead to the
overall system both in terms of communication and computing requirements.

Alternatively, one can design an \textbf{iterative mechanism} that is based
on observation of player actions $x$ instead of asking for their word (utilities). This approach
is the basis of the iterative schemes that will be presented in Section~\ref{sec:iterative}.

\subsection{Mechanism with global objective}

The second mechanism, $\mc M_2$ differs from the social welfare maximizing one $\mc M_1$
discussed in the previous subsection. In this case, the designer
has an organization-wide or ``global'' objective represented by the strictly concave and nondecreasing function $F(x)$ which
does not directly depend on player utilities. This organization-wide objective could be, 
for example, deployment and operation of an organization-wide risk management
solution that naturally requires cooperation from all units and an alignment of efforts.

In mechanism $\mc M_2$, the  risk manager formally solves the constrained
optimization problem
\begin{equation} \label{e:designerobj2}
 \max_x F(x) \text{ such that } \sum_i p_i x_i \leq B.
\end{equation}
The associated Lagrangian function is then
$$ L(x)=F(x) + \l \left( B- \sum_i p_i  x_i \right)  ,$$
where $\l>0$ is a scalar Lagrange multiplier. Note that the constraint is always active in this case due to the definition of $F(x)$. Under the concavity assumptions on $F(x)$, this leads to
\begin{equation} \label{e:global2}
\dfrac{\partial L}{\partial x_i}=0  \Rightarrow \dfrac{1}{p_i}\dfrac{\partial F(x)}{\partial x_i}= \l, \; \forall i \in \mc A,
\end{equation}
and the associated budget constraint is
\begin{equation} \label{e:constraint2}
\dfrac{\partial L}{\partial \l}=0 \Rightarrow \sum_i p_i x_i=B.
\end{equation}

Combining this with the player optimization problems to ensure efficiency and preference-compatibility
as in the previous subsection leads to
\begin{equation} \label{e:align1a}
\dfrac{1}{p_i} \dfrac{\partial F(x)}{\partial x_i}= \l, \; \forall i \in \mc A,
\end{equation}
and
\begin{equation} \label{e:align2a}
\sum_i p_i \left( \dfrac{\partial U_i(x)}{\partial x_i} \right)^{-1}( \b_i - p_i) =B,
\end{equation}
which are direct counterparts of (\ref{e:align1})-(\ref{e:align2}). As before, any solution constitutes a Nash equilibrium as it lies at the intersection of the player best responses.

\begin{prop} \label{prop:m2}
 Any solution of the mechanism $\mc M_2$ described above obtained from (\ref{e:align1a})-(\ref{e:align2a}) is both player preference-compatible (based on the strategic game $\mc G$, given in Definition~\ref{def:game}) and efficient, i.e. maximizes $F(x)$.
\end{prop}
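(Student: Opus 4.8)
The plan is to verify the two asserted properties separately, exploiting the fact that the defining system (\ref{e:align1a})--(\ref{e:align2a}) simultaneously encodes a player equilibrium condition and a designer optimality condition. Throughout I would treat a ``solution'' as a tuple $(x,p,\l)$ with $p_i>0$ and $\l>0$ for which the stated equations hold, and I would keep in mind that the claim is one of \emph{verification} rather than existence: I need only show that whenever such a tuple solves (\ref{e:align1a})--(\ref{e:align2a}), the allocation $x$ is both preference-compatible and efficient.

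First I would establish preference-compatibility. The quantity $\left(\partial U_i(x)/\partial x_i\right)^{-1}(\b_i-p_i)$ appearing in (\ref{e:align2a}) is exactly the unique stationary point of player $i$'s cost, obtained by setting $\partial J_i/\partial x_i=\b_i-\partial U_i(x)/\partial x_i-p_i=0$ and inverting the (strictly monotone) marginal utility. Hence the allocation $x$ carried by any solution is, componentwise, player $i$'s best response to $x_{-i}$ under the incentive $p_i$. Because $U_i$ is continuous, differentiable, and strictly concave, this stationary point is the global minimizer of $J_i(\cdot,x_{-i})$, so $x_i=\arg\min_{x_i}J_i(x_i,x_{-i})$ holds for every $i$ at once. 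By Definition~\ref{def:ne} this is precisely the statement that $x$ is a Nash equilibrium of $\mc G$, i.e.\ the preference-compatibility criterion.

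Next I would establish efficiency. Equations (\ref{e:align1a}) and (\ref{e:align2a}) are nothing but the stationarity condition (\ref{e:global2}) and the active budget constraint (\ref{e:constraint2}) of the Lagrangian $L(x)=F(x)+\l(B-\sum_i p_i x_i)$ attached to the designer's problem (\ref{e:designerobj2}). I would then argue that these Karush--Kuhn--Tucker conditions are not merely necessary but sufficient: the feasible region $\{x\in\mc X : \sum_i p_i x_i \le B\}$ is convex (the convex set $\mc X$ intersected with a halfspace) and $F$ is strictly concave, so $L(\cdot)$ is concave in $x$ for fixed $\l\ge 0$ and any interior stationary point is the unique global maximizer of $F$ over the feasible set. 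Since $F$ is nondecreasing, the optimum sits on the budget boundary, which justifies treating the constraint as active with $\l>0$, consistent with (\ref{e:constraint2}). Thus the solution $x$ maximizes $F(x)$ subject to the budget, giving efficiency.

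The main obstacle is conceptual rather than computational: I must confirm that a single allocation $x$ can discharge both roles at once, since $p$ enters the designer's stationarity condition (\ref{e:align1a}) through the weights $1/p_i$ while simultaneously shaping the players' best responses through $\b_i-p_i$. The verification goes through because (\ref{e:align1a})--(\ref{e:align2a}) are posed as one coupled system in $(x,p,\l)$, so a solution reconciles the two roles by construction; the only items needing care are the sign conditions $p_i>0$ and $\l>0$ (required both for invertibility of the weighting and for the active-constraint reading) and the feasibility $x\in\mc X$ at an interior point. Unlike mechanism $\mc M_1$, here the designer objective $F$ contributes no cross-derivative terms, so no analogue of the $\sum_{j\neq i}$ coupling in (\ref{e:align1}) appears and the efficiency argument reduces cleanly to the scalar KKT conditions above.
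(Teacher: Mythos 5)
Your proposal is correct and follows essentially the same route as the paper: it reads (\ref{e:align2a}) as the budget constraint evaluated at the players' unique best responses (hence a Nash equilibrium, giving preference-compatibility) and (\ref{e:align1a})--(\ref{e:constraint2}) as the KKT conditions of the designer's concave program (hence efficiency). You merely make explicit the sufficiency of the first-order conditions and the sign/feasibility caveats that the paper leaves implicit.
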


In mechanism $\mc M_2$, the risk manager has to evaluate the term $\partial F(x) / \partial x_i$
for each unit $i$, in addition to asking them for their utilities and cost factors. This term can be interpreted
as the rate of contribution of each unit to the organization-wide objective. Since the risk manager sets this
objective, it can be computed or estimated with reasonable accuracy. However, as before
the solution of (\ref{e:align1a})-(\ref{e:align2a}) also depends on individual unit utilities and cost factors.
Therefore, mechanism $\mc M_2$ --similar to $\mc M_1$ -- requires deployment of iterative methods in order to meet the
criterion of strategy-proofness.

\subsection{Interdependent Utilities and Linear Influence Model} \label{sec:coupledutil}

%The careful reader might have already noticed that 
In the presented model and analysis, utilities
of individual players (units) may depend not only on their own actions but also on those of others,
e.g. $U_i(x)=U_i ([x_1,\ldots,x_N])$. 
In other words, a unit benefits not only from own risk investments but also from efforts
of other related units. Such utility functions are called interdependent or nonseparable in
contrast to separable player utilities, $U_i(x_i)$, that depend only own actions.
If the player utilities are separable, then the player decisions are almost completely decoupled from each 
other except from external resource constraints (such as the   incentives they receive from the designer).
This simplifies development of decentralized schemes significantly.

One possible way of modeling  interdependencies in player utilities is the linear influence model, 
which captures how actions (investments) of players (units) affect others.
As a first-order approximation these effects are modeled as \textit{linear} resulting in an 
\textit{influence matrix} defined as
\begin{equation} \label{e:influencematrix}
W := 
\begin{cases}
1 & \text{, if } i =j , \\
w_{ij} & \text{, otherwise.}
\end{cases}
\end{equation}
where $0 \leq w_{ij} \leq 1$ denotes the  non-negative effect of unit $j$ ('s investment) on unit $i$. Notice
that this effect may well be zero. 

Define now the vector of \textit{effective investments}
$x^e=[x_1^e, \ldots, x_N^e]$, where the effective investment of unit $i$ is 
$$x^e_i :=\sum_j W_{ij} x_j = (W x)_i ,$$
and $(\cdot)_i$ denotes the $i^{th}$ element of a vector.

Naturally, it is possible to develop more complex nonlinear models to capture interdependencies between
units and their actions. However, given the limitations on information collection and accuracy,
the linear first order approximation described provides a good starting point. Therefore,
we will use linear influence model in the case of interdependent (non-separable) utilities 
for the rest of the paper.

Note that under the linear influence model, the nonseparable utility, $U_i(x)$, of player $i$ is 
given by 
$$U_i(x^e_i)=U_i \left( (W x)_i \right).$$

\section{Iterative Incentive Mechanisms} \label{sec:iterative}

Mechanisms $\mc M_1$ and $\mc M_2$ as defined in the previous section are shown to be efficient and
preference-compatible (See Propositions~\ref{prop:m1} and \ref{prop:m2}) but not strategy-proof. This section presents two iterative variants of these mechanisms that satisfy all three criterion and can be implemented under information limitations.

\subsection{Iterative mechanism with global objective}

In the iterative mechanism with global objective, $\mc{IM}_2$, both the risk manager
and units adopt an iterative scheme to facilitate information exchange
that does not allow cheating, hence resulting in a strategy-proof mechanism.
Specifically, the risk manager updates the Lagrangian multiplier $\l$ in (\ref{e:global2}) 
gradually according to
\begin{equation} \label{e:iterative1a}
 \l (n+1) = \l (n) + \kappa_d \left[ \sum_i p_i(n)  x_i (n) - B \right]^+  ,
\end{equation}
and computes the individual player  incentive factors
\begin{equation} \label{e:iterative2a}
 p_i (n)= \dfrac{1}{\l(n)} \dfrac{\partial F(x(n))}{\partial x_i} .
\end{equation}
Here, $n=1,\ldots$ denotes the iteration number or time-step. The  units (players)
in return react to given  incentive factors by updating their investment decisions
in order to minimize their own costs such that
\begin{equation} \label{e:iterative3a}
\begin{array}{lll}
 x_i (n+1) & = & \phi x_i (n) \\
   &+ & (1-\phi)\left( \dfrac{\partial U_i(x(n))}{\partial x_i}\right)^{-1} (\beta_i - p_i(n))   \;\; \forall i,
\end{array}
\end{equation}
where $0< \phi <1$ is a relaxation constant used by the players to 
prevent excessive fluctuations. Alternatively, this behavior can be justified with caution or 
inertia of the organizational units.

The equilibrium solution(s) of (\ref{e:iterative1a})-(\ref{e:iterative3a})
% assuming it exists (is real), 
clearly coincides with that of (\ref{e:align1a})-(\ref{e:align2a}). Hence, the iterative mechanism $\mc{IM}_2$,
assuming that it converges, solves the same problem as mechanism $\mc M_2$. Furthermore, it is strategy-proof since
at each update step, the players make decisions according to their own self interests and do not have the
opportunity of manipulating the system.  To see this, assume otherwise and let player $i$ ``misrepresent''
its actions $\tilde x_i = x_i + \delta$ for some $\delta \in \Real$. Then, the player's
instantaneous cost is $J_i (\tilde x_i) > J_i (x_i)$ at each step of the iteration. Hence, the
players have no incentive to ``cheat''. These results are summarized in the following theorem which extends Proposition~\ref{prop:m2}:

\begin{thm} \label{thm:m2}
Any solution of the iterative mechanism with global objective, $\mc{IM}_2$ described above and in Algorithm~\ref{alg:iterative1}
is player preference-compatible, efficient, and strategy-proof.
\end{thm}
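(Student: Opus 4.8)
The plan is to split the claim into its three criteria and dispatch them separately, reusing Proposition~\ref{prop:m2} for efficiency and preference-compatibility and reserving a fresh argument for strategy-proofness. Since the statement speaks of \emph{solutions} of the scheme, I would take convergence as given (as the surrounding discussion does) and reason about the fixed points of the maps (\ref{e:iterative1a})--(\ref{e:iterative3a}). The whole first half of the proof is then a fixed-point calculation showing these maps have exactly the stationarity system (\ref{e:align1a})--(\ref{e:align2a}) as their equilibria.

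First I would verify the equilibrium coincidence. Imposing $\l(n+1)=\l(n)$ in (\ref{e:iterative1a}) forces $\left[\sum_i p_i x_i - B\right]^+ = 0$, i.e. $\sum_i p_i x_i \le B$; because $F$ is nondecreasing the budget constraint is active (as already noted for $\mc M_2$), so this tightens to $\sum_i p_i x_i = B$, recovering (\ref{e:constraint2}). At a fixed point (\ref{e:iterative2a}) is literally (\ref{e:align1a}) after clearing $p_i$. Finally, setting $x_i(n+1)=x_i(n)$ in (\ref{e:iterative3a}) and using $0<\phi<1$ cancels the $\phi$ terms and leaves $x_i = \left(\partial U_i(x)/\partial x_i\right)^{-1}(\b_i - p_i)$, the player best response; substituting this into $\sum_i p_i x_i = B$ produces exactly (\ref{e:align2a}). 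Hence the fixed points of the iteration coincide with solutions of (\ref{e:align1a})--(\ref{e:align2a}), and Proposition~\ref{prop:m2} then delivers \textbf{preference-compatibility} and \textbf{efficiency} at once.

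For \textbf{strategy-proofness} I would argue at the level of a single update step. The instantaneous cost $J_i(x_i,x_{-i}) = \b_i x_i - U_i(x) - p_i x_i$ is strictly convex in $x_i$, since $-U_i$ is strictly convex (strict concavity of $U_i$) while the remaining terms are linear in $x_i$; it therefore has a unique minimizer, namely the best response $\left(\partial U_i(x)/\partial x_i\right)^{-1}(\b_i - p_i)$ toward which (\ref{e:iterative3a}) drives the player. Any deviation $\tilde x_i = x_i + \delta$ with $\delta \neq 0$ thus yields $J_i(\tilde x_i,x_{-i}) > J_i(x_i,x_{-i})$ at that step. The structural point I would stress is that in $\mc{IM}_2$ the incentives $p_i$ are generated solely from the designer's own objective $F$ and the budget-driven multiplier $\l$ through (\ref{e:iterative1a})--(\ref{e:iterative2a}), and are never functions of a \emph{reported} utility; a player's only lever on the mechanism is its actual, observable investment, so deviating from best response can only hurt.

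The step I expect to be the genuine obstacle is making strategy-proofness airtight against a \emph{forward-looking} deviator who might sacrifice instantaneous cost early in order to steer the future trajectory of $\l$ and $p_i$ to its advantage; the per-step convexity argument alone only delivers myopic incentive-compatibility. To close this gap I would either (i) invoke the behavioral assumption of Section~\ref{sec:model} that each unit best-responds greedily at each step, under which myopic optimality is precisely what the criterion demands, or (ii) observe that $p_i$ depends on $x_i$ only through the designer-controlled quantities $\partial F/\partial x_i$ and $\l$, so no instantaneous sacrifice can be recovered at the unique equilibrium and truthful best-responding weakly dominates. I would state explicitly which behavioral model is assumed, since the strength of the theorem rests on it.
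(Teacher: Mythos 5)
Your proposal follows essentially the same route as the paper: the paper likewise identifies the fixed points of (\ref{e:iterative1a})--(\ref{e:iterative3a}) with the solutions of (\ref{e:align1a})--(\ref{e:align2a}) (though it asserts this coincidence as ``clear'' rather than verifying it step by step as you do) and then argues strategy-proofness by the same per-step observation that a deviation $\tilde x_i = x_i + \delta$ strictly increases the instantaneous cost $J_i$. Your closing caveat is well placed: the paper's argument is precisely the myopic one, resting implicitly on the assumption that players best-respond greedily at each iteration, and it does not address the forward-looking deviator you describe.
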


Information flow and limitations play a crucial role in implementation of the iterative mechanism $\mc{IM}_2$. 
In practice, the risk manager is assumed to observe the actions of units which they have to reveal
in order to receive  incentives. Based on this information and the total budget, the risk manager can easily implement (\ref{e:iterative1a}). Then, it only needs to estimate the individual marginal contributions of units to the overall objective, $\partial F(x(n)) /\partial x_i $ at a given moment in order to decide on actual  incentive factors in (\ref{e:iterative2a}). 

Likewise, given own cost estimates $\b_i$ and incentive factor $p_i$, 
each unit (player) only has to determine the marginal benefit from its own actions, $\partial U_i(x(n)) / \partial x_i$ in order to implement (\ref{e:iterative3a}). If the unit has a separable utility, then this is simply equivalent to $\partial U_i(x_i(n)) / \partial x_i$.
In the interdependent utility case, under the linear influence model this quantity turns out to be
the marginal benefit from the effective action, 
$$ 
% \begin{array}{l}
% \dfrac{\partial U_i(x(n))}{\partial x_i}=\dfrac{\partial U_i(W x(n))_i )}{\partial x_i}   \\ \\
% =\dfrac{\partial U_i( x_i^e(n))}{\partial x_i^e}\dfrac{\sum_j W_{ij} x_j}{x_i}=\dfrac{\partial U_i( x_i^e(n))}{\partial x_i^e},   
% \end{array} 
\dfrac{\partial U_i(x(n))}{\partial x_i}= \dfrac{\partial U_i( x_i^e(n))}{\partial x_i^e}\dfrac{\sum_j W_{ij} x_j}{x_i}=\dfrac{\partial U_i( x_i^e(n))}{\partial x_i^e},   
$$ 
as a result of $W_{ii}=1$ and the definitions of respective quantities. Algorithm~\ref{alg:iterative1}
summarizes the steps of the iterative mechanism with global objective, $\mc{IM}_2$.
\begin{algorithm}[!ht]
  \SetAlgoLined
  \KwIn{\textit{Designer}: budget $B$ and global objective $F(x)$}
  \KwIn{\textit{Players}: cost factor $\b_i$ and utilities $U_i, \forall i$}
  %\KwData{this text}
  \KwResult{Player investments $x$ and incentive factors $p$}
  
  Initial investments $x_0$ and  incentive factors $p_0$ \;
  \Repeat{end of iteration (negotiation)}{
   \Begin(\textit{Designer:}){
    Observe player investments $x$ \;
    Update $\l$ according to (\ref{e:iterative1a}) \;
    Estimate marginal contributions of players to global objective, $\partial F(x) /\partial x_i $ \;
    \ForEach{player $i$}{
    Compute  incentive factor $p_i$ from (\ref{e:iterative2a}) \;
    }
   }
   \Begin(\textit{Players:}){
    \ForEach{player $i$}{
      Estimate marginal utility $\partial U_i(x)/ \partial x_i$ \;
      Compute investment $x_i$ from (\ref{e:iterative3a}) \;
    }  
   }
  }
  \caption{Iterative mechanism  $\mc{IM}_2$} \label{alg:iterative1}
\end{algorithm}

\subsubsection*{Convergence Analysis of $\mc{IM}_2$}

A basic stability analysis is provided for a continuous-time approximation of
the iterative mechanism with global objective, $\mc{IM}_2$. For tractability, let the
player utilities be of the form $U_i=\a_i \log(x_i)$. Further define the global objective
function of the risk manager as $F(x):=\sum_i \gamma_i x_i$, for some $\gamma_i >0 \; \forall i$.

Substituting $p_i$ with $\gamma_i / \lambda$, the continuous-time counterpart of (\ref{e:iterative1a})-(\ref{e:iterative3a}) is
\begin{eqnarray} \label{e:contiterative1}
 \dot \lambda = \dfrac{d \l}{dt}= \kappa_{\lambda} \dfrac{1}{\lambda}\left( \sum_i \gamma_i x_i (n) - B \right) \\
 \dot x_i =-\kappa_i \dfrac{\partial J_i}{\partial x_i }=\kappa_i 
 \left(\dfrac{\a_i}{x_i} + \dfrac{\gamma_i}{\lambda} - \b_i  \right) \;\; , \forall i \in \mc A. \nonumber
\end{eqnarray}
where $t$ denotes time and $\kappa_{\lambda},\; \kappa_i>0$ are step-size constants. As in the discrete-time
version, the players adopt here a gradient best response algorithm. Define the Lyapunov function
$$ V_L:= \frac{1}{2} \left( \dfrac{ \sum_i\gamma_i x_i  - B}{\lambda}  \right)^2 +
 \frac{1}{2} \sum_i  \left(\dfrac{\a_i}{x_i} + \dfrac{\gamma_i}{\lambda} - \b_i   \right)^2 ,$$
which is nonnegative except at the solution(s) of (\ref{e:iterative1a})-(\ref{e:iterative3a}), i.e. $V_L (x^*,\l^*)=0$. 

Taking the derivative of $V_L$ with respect to time yields
$$ \dot V_L (x,\l) = -2 \dfrac{\sum_i\gamma_i x_i}{\lambda^3} \left( \dfrac{ \sum_i\gamma_i x_i  - B}{\lambda}  \right)^2 
- \sum_i \dfrac{\a_i}{x_i^2} \left(\dfrac{\a_i}{x_i} + \dfrac{\gamma_i}{\lambda} - \b_i   \right)^2.
$$
Consider the region where $F(x)=\sum_i \gamma_i x_i >0$. Then, there exists an $\epsilon>0$ such that
$$ \dot V_L (x,\l) \leq \epsilon V_L <0 , \;\; \forall (x,\l) \neq (x^*,\l^*),$$
i.e. for any point of the trajectory $(x,\l)$ not equal to a solution of (\ref{e:iterative1a}) and (\ref{e:iterative3a}). Thus, the continuous-time algorithm is exponentially stable \cite{khalilbook} on the set
$\mc{\bar X}:=\{ x \in \mc X : F(x) >0 \}$.
This result, which is summarized in the next theorem, is a strong indicator of fast convergence \cite{bertsekas3} of the discrete-time iterative pricing mechanism (\ref{e:iterative2a})-(\ref{e:iterative2b}). 

\begin{prop} \label{thm:converge1}
The continuous-time approximation of the iterative mechanism $\mc{IM}_2$, given by (\ref{e:contiterative1}) exponentially converges to a solution of (\ref{e:iterative1a})-(\ref{e:iterative3a})  on the set
 $\mc{\bar X}=\{ x \in \mc X : F(x) >0 \}$.
\end{prop}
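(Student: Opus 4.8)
The plan is to establish exponential stability of the equilibrium of (\ref{e:contiterative1}) by Lyapunov's direct method, using $V_L$ as the candidate and proving a differential inequality of the form $\dot V_L \le -\epsilon V_L$. I would first introduce the shorthand $q := (\sum_i \gamma_i x_i - B)/\lambda$ and $r_i := \alpha_i/x_i + \gamma_i/\lambda - \beta_i$, so that the two right-hand sides of (\ref{e:contiterative1}) are, up to the positive constants $\kappa_\lambda,\kappa_i$, proportional to $q$ and to the $r_i$, and the equilibrium $(x^*,\lambda^*)$ --- a solution of (\ref{e:align1a})-(\ref{e:align2a}), equivalently of (\ref{e:iterative1a})-(\ref{e:iterative3a}) --- is exactly the common zero set $\{q=0,\ r_i=0\ \forall i\}$. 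Then $V_L=\tfrac12 q^2+\tfrac12\sum_i r_i^2$ is continuous, nonnegative on $\{\lambda>0\}\cap\mc{\bar X}$, and vanishes precisely on that set, so it is an admissible Lyapunov candidate; I would also check forward invariance of $\{\lambda>0\}\cap\mc X$ so that $p_i=\gamma_i/\lambda$, the dynamics, and $V_L$ remain well defined along trajectories.

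Second, taking the derivative along (\ref{e:contiterative1}) reproduces the expression recorded above,
\[ \dot V_L=-c_q\,q^2-\sum_i c_i\,r_i^2,\qquad c_q:=\frac{2F(x)}{\lambda^3},\ \ c_i:=\frac{\alpha_i}{x_i^2}, \]
which on $\{\lambda>0\}$ is a sum of nonpositive terms vanishing only at equilibrium. Since $q^2+\sum_i r_i^2=2V_L$, this gives $\dot V_L\le-2\mu\,V_L$ with $\mu:=\min\{c_q,c_1,\dots,c_N\}$, so the whole problem reduces to bounding the gains below by a common positive constant. The per-player gains are immediate: compactness of $\mc X$ yields $c_i=\alpha_i/x_i^2\ge \alpha_i/(\max_{\mc X}x_i)^2>0$. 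The factor $F(x)=\sum_i\gamma_i x_i$ in $c_q$ is positive exactly on $\mc{\bar X}$ --- which is precisely why the statement restricts to this set --- so together with an upper bound $\lambda\le\Lambda$ it would give $c_q\ge 2F(x)/\Lambda^3>0$, whence $\epsilon:=2\mu>0$ and $\dot V_L\le-\epsilon V_L<0$ off the equilibrium.

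The hard part, and where I expect the argument to need the most care, is making the rate $\epsilon$ genuinely uniform, which hinges entirely on controlling $\lambda$. The Lyapunov function keeps $\lambda$ away from $0$ --- because $r_i$ contains the term $\gamma_i/\lambda$, every sublevel set $\{V_L\le c\}$ forces $\lambda\ge\lambda_{\min}(c)>0$ --- but it does \emph{not} bound $\lambda$ from above: as $\lambda\to\infty$ one has $q\to0$ and $r_i\to\alpha_i/x_i-\beta_i$, so $V_L$ stays bounded while the decay gain $c_q=2F(x)/\lambda^3\to0$. Hence sublevel sets are noncompact in the $\lambda$ direction and no single $\epsilon$ is available over all of $\mc{\bar X}$. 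Near $(x^*,\lambda^*)$ this difficulty vanishes, since there $\lambda\approx\lambda^*$ and $F(x)\approx B>0$, so local exponential stability follows cleanly; extending it to the full region $\mc{\bar X}$ requires an a priori upper bound on $\lambda$ along trajectories (or a reweighting of $V_L$ penalising large $\lambda$). I would try to obtain that bound by analysing the $\dot\lambda$ equation jointly with the exponential decay of $q$, since $F(x)\to B$ as $x$ converges should render $\dot\lambda$ integrable.

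Finally, on any invariant region where $\lambda\le\Lambda$ holds, the inequality $\dot V_L\le-\epsilon V_L$ integrates to $V_L(t)\le V_L(0)e^{-\epsilon t}$, so $q$ and every $r_i$ decay exponentially. To read this back as exponential convergence of the state to a solution of (\ref{e:iterative1a})-(\ref{e:iterative3a}), I would note that the map $(x,\lambda)\mapsto(q,r_1,\dots,r_N)$ is smooth with nonsingular Jacobian at the equilibrium, so locally $V_L$ is comparable to $\|(x,\lambda)-(x^*,\lambda^*)\|^2$; exponential decay of $V_L$ then forces exponential decay of the state error, and the standard Lyapunov exponential-stability theorem \cite{khalilbook} delivers the claim.
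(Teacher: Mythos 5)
Your proposal follows exactly the paper's own route: the same Lyapunov function $V_L$, the same computation of $\dot V_L$ as a negative combination of $q^2$ and the $r_i^2$ with gains $2F(x)/\lambda^3$ and $\alpha_i/x_i^2$, and the same restriction to the set where $F(x)>0$. The uniformity issue you flag --- that the gain $2F(x)/\lambda^3$ tends to $0$ as $\lambda\to\infty$, so no single $\epsilon$ is available without an a priori upper bound on $\lambda$ along trajectories --- is genuine, but the paper's proof does not address it either: it simply asserts that such an $\epsilon>0$ exists (and indeed writes $\dot V_L \leq \epsilon V_L < 0$ where $-\epsilon V_L$ is clearly intended). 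So you have reproduced the paper's argument and been more candid than the paper about the one step that is left incomplete.
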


The \textit{exponential convergence} result above indicates a very fast convergence rate. To see this, let $x(0)$ be the initial player investments and $x^*$ denote  a solution of (\ref{e:iterative1a})-(\ref{e:iterative3a}). Then, for the player investments $x(t)$ under continuous-time approximation of the iterative mechanism $\mc{IM}_2$ the following holds:
$$ \norm{x(t) - x^*} \leq \alpha \norm{x(0) - x^*} e^{-\beta t},$$
for $t \geq 0$ and some $\alpha, \beta >0$. In other words, the investment levels approach their equilibrium values exponentially fast.

\subsection{Iterative welfare maximizing mechanism}

The iterative welfare maximizing mechanism $\mc{IM}_1$ extends mechanism $\mc M_1$. Same
as the previous mechanism, the risk manager updates the Lagrangian multiplier $\l$ according to (\ref{e:iterative1a})
and the unit updates are given by (\ref{e:iterative3a}).

However, the computation of individual player  incentive factors is more involved due to the dependence
of the objective (welfare maximization) on individual player utilities
\begin{equation} \label{e:iterative2b}
 p_i (n)= \dfrac{1}{\l(n)} \sum_j \dfrac{\partial U_j(x(n))}{\partial x_i} ,
\end{equation}
which follows from (\ref{e:align1}).
At first glance, it seems that the designer has to ask players again for their marginal utility which defeats
the purpose of the iterative approach, namely ensuring strategy-proofness. Fortunately, the designer
can circumvent this issue by utilizing side information, in this case player cost factors $\b$, within the
linear influence model.

It directly follows from the linear influence model that
$$ \dfrac{\partial U_j}{\partial x_i}=\dfrac{\partial U_j}{\partial x_j^e} \dfrac{\partial x_j^e}{\partial x_i}
=\dfrac{\partial U_j}{\partial x_j^e} W_{ji}= \dfrac{\partial U_j}{\partial x_j} W_{ji}.$$ 
The actions of any player $i$ chosen according to a (relaxed) best response (\ref{e:iterative3a}), and
observed by the designer yields the information
$$ \dfrac{\partial U_i(x)}{\partial x_i}=\beta_i - p_i $$
to the designer. Hence, the substitution 
$$ \sum_j \dfrac{\partial U_j(x(n))}{\partial x_i}=\sum_j ( \beta_i - p_i) W_{ji}, \; \forall i,j \, , $$
can be used in (\ref{e:iterative2b}) to obtain
$$ \l p = W^T (\b -p).$$
Thus, the designer implements 
\begin{equation} \label{e:iterative2c}
p= (W^T + \l I)^{-1} W^T \b 
\end{equation}
together with (\ref{e:iterative1a}) to determine player  incentive factors. Here, $(\cdot)^T$ denotes the transpose operator and $I$ the identity matrix. These results are summarized in the following theorem which extends Proposition~\ref{prop:m1}:

\begin{thm} \label{thm:m1}
Any solution of the iterative mechanism with global objective, $\mc{IM}_1$ described above and in Algorithm~\ref{alg:iterative2}
is player preference-compatible, efficient, and strategy-proof.
\end{thm}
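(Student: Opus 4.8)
The plan is to mirror the structure used for Theorem~\ref{thm:m2}, since $\mc{IM}_1$ extends Proposition~\ref{prop:m1} in exactly the way $\mc{IM}_2$ extended Proposition~\ref{prop:m2}. I would split the argument into two parts: first, identify the equilibria (fixed points) of the iterative recursion (\ref{e:iterative1a}), (\ref{e:iterative2c}), (\ref{e:iterative3a}) with the solutions of (\ref{e:align1})--(\ref{e:align2}) characterizing $\mc M_1$, so that efficiency and preference-compatibility follow immediately from Proposition~\ref{prop:m1}; second, establish strategy-proofness directly from the per-step structure of the scheme.

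For the first part I would evaluate each recursion at a fixed point $(x^*,\l^*,p^*)$. The $\l$-update (\ref{e:iterative1a}) is stationary only when $[\sum_i p_i^* x_i^* - B]^+ = 0$; invoking the active-budget assumption this gives $\sum_i p_i^* x_i^* = B$, i.e. (\ref{e:align2}). The relaxed best response (\ref{e:iterative3a}) collapses at stationarity to $x_i^* = (\partial U_i/\partial x_i)^{-1}(\b_i - p_i^*)$, which is exactly the preference-compatible player optimum. The remaining task is to show that the closed form (\ref{e:iterative2c}) reproduces (\ref{e:align1}). Rewriting (\ref{e:iterative2c}) as $(W^T + \l I)p = W^T \b$, equivalently $\l p_i = \sum_j (\b_j - p_j) W_{ji}$, I would then use the linear-influence identity $\partial U_j/\partial x_i = (\partial U_j/\partial x_j) W_{ji}$ together with the observed best-response relation $\partial U_j/\partial x_j = \b_j - p_j$ to recognise the right-hand side as $\sum_j \partial U_j/\partial x_i$, so that (\ref{e:iterative2c}) is precisely (\ref{e:iterative2b}), hence (\ref{e:align1}). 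With (\ref{e:align1})--(\ref{e:align2}) both satisfied, Proposition~\ref{prop:m1} yields efficiency and preference-compatibility.

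For strategy-proofness I would argue exactly as in Theorem~\ref{thm:m2}. The crucial observation is that the incentive vector (\ref{e:iterative2c}) is assembled from side information only -- the influence matrix $W$, the cost factors $\b$, and the multiplier $\l$ -- rather than from any declared utility, so there is no utility report left to falsify; the designer instead reconstructs the welfare gradient from the observed actions via $\partial U_i/\partial x_i = \b_i - p_i$. The only quantity a player controls is its action $x_i$, and by the strict concavity of $U_i$ (equivalently strict convexity of $J_i$) any deviation $\tilde x_i = x_i + \delta$ with $\delta \neq 0$ yields $J_i(\tilde x_i) > J_i(x_i)$ at every step. Hence no player can lower its instantaneous cost by misrepresenting its action, and the mechanism is strategy-proof.

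The hard part will be the justification underlying the substitution step and its strategic reading. Analytically, one must ensure $(W^T + \l I)$ is invertible so that (\ref{e:iterative2c}) is well defined; for $\l > 0$ this holds whenever the off-diagonal weights are bounded so that the Gershgorin disks of $W$, centred at its unit diagonal, exclude $-\l$ (e.g. under the natural normalisation $\sum_{j \neq i} w_{ij} < 1$). More delicately, because (\ref{e:iterative2c}) couples all incentive factors through a matrix inverse and $\l$ is itself driven by $\sum_i p_i x_i$, a forward-looking player could in principle try to perturb its action to shift the entire vector $p$ to its advantage; the clean way to exclude this, and the route I would take, is the myopic per-step cost argument above, which establishes strategy-proofness stage by stage and thereby sidesteps any inter-temporal manipulation. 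A final detail to pin down is that the budget constraint is active at the fixed point, so that the equilibrium of (\ref{e:iterative1a}) matches (\ref{e:align2}) with equality rather than inequality.
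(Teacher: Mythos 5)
Your proposal is correct and follows essentially the same route as the paper: the paper also establishes Theorem~\ref{thm:m1} by noting that the fixed points of (\ref{e:iterative1a}), (\ref{e:iterative2c}), (\ref{e:iterative3a}) coincide with the solutions of (\ref{e:align1})--(\ref{e:align2}) (so efficiency and preference-compatibility carry over from Proposition~\ref{prop:m1}), derives (\ref{e:iterative2c}) from exactly the substitution $\partial U_j/\partial x_i = (\partial U_j/\partial x_j)W_{ji} = (\b_j - p_j)W_{ji}$ you describe, and reuses the per-step deviation argument $J_i(\tilde x_i) > J_i(x_i)$ from Theorem~\ref{thm:m2} for strategy-proofness. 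Your added care about the invertibility of $(W^T + \l I)$ and the activeness of the budget constraint addresses details the paper leaves implicit (and your indexing $(\b_j - p_j)W_{ji}$ is the consistent form of the paper's substitution), but it does not change the argument.
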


The information structure in mechanism $\mc{IM}_1$ is similar to that of $\mc{IM}_2$ with the
following differences. In $\mc{IM}_1$, the risk manager has to estimate the linear dependencies in the system
represented by the matrix $W$ and observe cost factors $\b$ of units in addition to their investments.
These information requirements are due to the complex nature of the welfare maximization objective,
which necessitates additional (indirect) communication between the risk manager and
units in practice. Algorithm~\ref{alg:iterative2} summarizes the steps of the welfare maximizing mechanism
 $\mc{IM}_1$.

\begin{algorithm}[hpb]
  \SetAlgoLined
  \KwIn{\textit{Designer}: budget $B$ and objective $\sum_i U_i$}
  \KwIn{\textit{Players}: cost factor $\b_i$ and utilities $U_i, \forall i$}
  %\KwData{this text}
  \KwResult{Player investments $x$ and  incentive factors $p$}
  
  Initial investments $x_0$ and  incentive factors $p_0$ \;
  \Repeat{end of iteration (negotiation)}{
   \Begin(\textit{Designer:}){
    Observe player actions $x$ and cost factors $\b$ \;
    Estimate the linear influence matrix $W$ \;
    Update $\l$ according to (\ref{e:iterative1a}) \;
    Compute  incentive factors $p$ from (\ref{e:iterative2c}) \;
    }
   \Begin(\textit{Players:}){
    \ForEach{player $i$}{
      Estimate marginal utility $\partial U_i(x)/ \partial x_i$ \;
      Compute investment $x_i$ from (\ref{e:iterative3a}) \;
    }  
   }
  }
  \caption{Iterative mechanism  $\mc{IM}_1$} \label{alg:iterative2}
\end{algorithm}

\subsubsection*{Convergence Analysis of $\mc{IM}_1$}

A basic stability analysis is provided for a continuous-time approximation of
the iterative mechanism  $\mc{IM}_1$ similar to the one of the $\mc{IM}_2$ in the previous subsection. 
For tractability, let the player utilities be of the form $U_i=\a_i \log(x_i)$ as before.

Substituting $p_i$ with
$$p_i= \dfrac{\b_i }{1+\l}, $$
which follows from (\ref{e:iterative2c}) and $W=I$,
the continuous-time counterpart of (\ref{e:iterative1a}) and (\ref{e:iterative3a}) is
\begin{eqnarray} \label{e:contiterative2}
 \dot \lambda = \dfrac{d \l}{dt}= \kappa_{\lambda} \dfrac{1}{1+ \lambda}\left( \sum_i \b_i x_i (n) - B \right) \\
 \dot x_i =-\kappa_i \dfrac{\partial J_i}{\partial x_i }=\kappa_i 
 \left(\dfrac{\a_i}{x_i} + \dfrac{\b_i}{1+ \lambda} - \b_i  \right) \;\; , \forall i \in \mc A. \nonumber
\end{eqnarray}
where $t$ denotes time and $\kappa_{\lambda},\; \kappa_i>0$ are step-size constants. As in the discrete-time
version, the players adopt here a gradient best response algorithm. Define the Lyapunov function
$$ \bar V_L:= \frac{1}{2} \left( \dfrac{ \sum_i\b_i x_i  - B}{1+ \lambda}  \right)^2 +
 \frac{1}{2} \sum_i  \left(\dfrac{\a_i}{x_i} + \dfrac{\b_i}{1+ \lambda} - \b_i   \right)^2 ,$$
which is nonnegative except at the solution(s) of (\ref{e:iterative1a}) and (\ref{e:iterative3a}), i.e. $\bar V_L (x^*,\l^*)=0$. 

Taking the derivative of $\bar V_L$ with respect to time yields
$$ \dot{\bar{V}}_L (x,\l) = -2 \dfrac{\sum_i\b_i x_i}{(1+\lambda)^3} \left( \dfrac{ \sum_i\b_i x_i  - B}{1+\lambda}  \right)^2 
- \sum_i \dfrac{\a_i}{x_i^2} \left(\dfrac{\a_i}{x_i} + \dfrac{\b_i}{1+\lambda} - \b_i   \right)^2.
$$
Consider the region where $\sum_i \b_i x_i >0$. Then, there exists an $\epsilon>0$ such that
$$ \dot{\bar{V}}_L (x,\l) \leq \epsilon \bar V_L <0 ,\;\; \forall (x,\l) \neq (x^*,\l^*), $$
i.e. for any point of the trajectory $(x,\l)$ not equal to a solution of (\ref{e:iterative1a}) and (\ref{e:iterative3a}). 
Thus, the continuous-time algorithm is exponentially stable \cite{khalilbook} on the set
$\mc{\tilde X}:=\{ x \in \mc X : \sum_i \b_i x_i >0 \}$.
This result, which is summarized in the next theorem, is a strong indicator of fast convergence \cite{bertsekas3} of the discrete-time iterative pricing mechanism $\mc{IM}_1$. 

\begin{prop} \label{thm:converge2}
The continuous-time approximation of the iterative mechanism $\mc{IM}_1$, given by (\ref{e:contiterative2}) exponentially converges to a solution on the set  $\mc{\tilde X}:=\{ x \in \mc X : \sum_i \b_i x_i >0 \}$. 
\end{prop}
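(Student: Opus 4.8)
The plan is to prove exponential stability of the flow (\ref{e:contiterative2}) by a Lyapunov argument that mirrors the analysis of $\mc{IM}_2$ culminating in Proposition~\ref{thm:converge1}, since setting $W=I$ and $U_i=\a_i\log(x_i)$ reduces $\mc{IM}_1$ to the same structure with $p_i=\b_i/(1+\l)$ playing the role of $\g_i/\l$. First I would write the flow in residual form by defining $r_\l:=(\sum_i\b_i x_i-B)/(1+\l)$ and $r_i:=\a_i/x_i+\b_i/(1+\l)-\b_i$, so that (\ref{e:contiterative2}) reads $\dot\l=\kappa_\l r_\l$ and $\dot x_i=\kappa_i r_i$. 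This makes the target equilibrium $(x^*,\l^*)$ the common zero of all residuals, and it coincides exactly with a solution of (\ref{e:iterative1a}) and (\ref{e:iterative3a}) (equivalently the specialization of (\ref{e:align1a})--(\ref{e:align2a})), because the flow rests precisely where the mechanism's optimality and budget conditions hold.

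Next I would justify that $\bar V_L=\tfrac12 r_\l^2+\tfrac12\sum_i r_i^2$ is a legitimate Lyapunov function: it is smooth and nonnegative on the admissible domain, and vanishes if and only if every residual is zero, i.e. iff $(x,\l)=(x^*,\l^*)$, giving local positive definiteness about the equilibrium. Here I would use that the logarithmic utilities force $x_i>0$ and that $\mc X$ is compact, so along trajectories each $x_i$ remains in a fixed interval $[\underline x,\overline x]\subset(0,\infty)$ and $\l\ge 0$ stays bounded. This confinement is what later converts pointwise negativity of $\dot{\bar V}_L$ into a uniform decay estimate.

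The core step is to differentiate $\bar V_L$ along (\ref{e:contiterative2}) and recover the stated expression $\dot{\bar V}_L=-2\frac{\sum_i\b_i x_i}{(1+\l)^3}r_\l^2-\sum_i\frac{\a_i}{x_i^2}r_i^2$, after substituting the dynamics and checking that the mixed residual terms cancel. On the set $\mc{\tilde X}=\{x\in\mc X:\sum_i\b_i x_i>0\}$ both prefactors $\frac{\sum_i\b_i x_i}{(1+\l)^3}$ and $\frac{\a_i}{x_i^2}$ are strictly positive and, by compactness together with the bounds on $x_i$ and $\l$, bounded below by a single positive constant; hence there is an $\eps>0$ with $\dot{\bar V}_L\le-\eps\bar V_L<0$ for all $(x,\l)\neq(x^*,\l^*)$. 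Invoking the standard Lyapunov exponential-stability theorem \cite{khalilbook} then yields an estimate of the form $\norm{x(t)-x^*}\le\a\norm{x(0)-x^*}e^{-\b t}$ on $\mc{\tilde X}$, which is the claimed conclusion.

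The main obstacle is the final comparison $\dot{\bar V}_L\le-\eps\bar V_L$: one must certify that the negative quadratic form $\dot{\bar V}_L$ dominates $\bar V_L$ with a single constant over the entire admissible region, which requires the coefficient bounds to be \emph{uniform} rather than merely pointwise. This is precisely where the restriction to $\mc{\tilde X}$ (keeping $\sum_i\b_i x_i$ away from $0$) and the compactness of $\mc X$ (keeping $x_i$ away from $0$ and $\l$ bounded) are indispensable; without them the prefactor $\frac{\sum_i\b_i x_i}{(1+\l)^3}$ could degenerate and exponential decay would fail. A secondary technical point is the cancellation of the cross terms in $\dot{\bar V}_L$, which I would verify carefully against the choice of step sizes $\kappa_\l$ and $\kappa_i$, since only the surviving diagonal terms carry the stability estimate.
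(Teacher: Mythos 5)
Your proposal follows essentially the same route as the paper: the identical Lyapunov function $\bar V_L=\tfrac12 r_\l^2+\tfrac12\sum_i r_i^2$ in the same residual variables, the same expression for $\dot{\bar V}_L$, and the same uniform comparison $\dot{\bar V}_L\le-\eps\bar V_L$ on $\mc{\tilde X}$ concluding exponential stability via \cite{khalilbook}. Your added attention to the uniformity of the coefficient lower bounds (via compactness of $\mc X$) and to the cancellation of the cross terms in $\dot{\bar V}_L$ --- which does depend on the step sizes $\kappa_\l,\kappa_i$ and is passed over silently in the paper --- correctly identifies the only steps requiring further verification.
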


\section{Use Case Scenario and Numerical Analysis} \label{sec:numerical}

In order to illustrate the incentive mechanism framework for risk management, a use case
scenario is described next. Since most organizations do not openly publish their actual risk management
structure or numbers, this scenario is naturally hypothetical and the numbers in the subsequent
numerical analysis do not necessarily coincide with real world counterparts. 

\subsection{Example Use Case Scenario}

In this subsection, a possible use cases scenario is described for a large-scale enterprise with
multiple autonomous business units, denoted by set $\mc A$, who collaborate and share IT infrastructure in order 
to provide various services and products. In addition to the business units, the enterprise headquarters has a special security  risk management division, which will be simply referred to as ``risk manager'' here. The task of the risk manager, $\D$,
is successful deployment and operation of security and IT risk management projects that
entail enterprise-wide computer-assisted information collection (observation), risk assessment (decision making), 
and mitigation (control). 

The results and algorithms described in this paper can be utilized to develop a manual risk management strategy 
as well as a technical system to handle a large number of business units and multiple concurrent risk 
management projects. For simplicity and as a special case of the latter, this scenario focuses on the former.

Let the risk manager start a project to improve robustness of the IT systems involved in a product against
security threats. The success of the project naturally depends on collaboration of the $6$ specific business
units involved at various stages of the product in question. However, not each unit plays an equal role
in creation of the product, and hence, their risk exposure is different. Therefore, those units with a 
more significant role have to make a larger investment to the project and their IT systems. 

During the project, the divisions have to provide accurate information on their business and networked systems.
At the operational phase, each division allocates manpower and resources for the proper operation of 
the system. Hence, participation in this risk management project is associated with a certain cost to each unit in terms of 
investments and manpower. Although each unit sees a certain amount of value in the new risk management system,
if they are left alone to themselves, their contributions may not be sufficient for the successful realization of the
risk management system. Thus, the risk manager uses parts of its budget for subsidizing individual unit investments,
if necessary in the form of manpower and expertise.

Let $x=[x_1, x_2, \ldots, x_6]$ denote the investments (project contributions) of business units. Their contribution 
to the project is evaluated using the multi-variable objective function $F(x)$, which describes the goal of the
entire project. The individual marginal contribution of a business unit $i$ (one of six) to the project
goal at a given (project) state is given by the derivative, $\partial F(x) / \partial x_i$. It is important to note
that risk manager may not know the exact form of $F(x)$ before hand, and has to estimate $\partial F(x) / \partial x_i$
for each business unit $i$ at a given state.

The goal of the risk manager is to ensure the success of the project, which may be captured by making the objective
function achieve a certain minimum threshold value, i.e. $F(x)>V_{threshold}$. The subsidies given to the units
(monetarily or in the form of assistance) are determined in proportion to their current investments. For example,
the business unit $i$ receives $p_i x_i$. These subsidies have to be of course within the allocated budget, i.e.
$\sum_{i=1}^6 p_i x_i \leq B$. Note that, the budget in question is periodic, e.g. $B$ units per month or year.

The interaction between the risk manager and individual units is designed according to Algorithm~\ref{alg:iterative1}
based on the $\mc{IM}_2$. The actual time-scale of the iteration depends on the specific requirements of the
enterprise. For example, the risk managers and representatives from the units may come together in weekly
or bi-weekly intervals to evaluate the progress, which gives some time to the units and manager for updating 
own evaluations on marginal benefits and contributions, respectively. We next present a numerical example
to further illustrate the scenario described.

\subsection{Numerical Analysis}

Based on the use case scenario, an example is numerically analyzed with a risk manager and $6$ units, who implement 
the iterative mechanism $\mc{IM}_2$ using Algorithm~\ref{alg:iterative1}. The budget is $B=3$,
the global objective function of the risk manager is $F(x)=\sum_{i=1}^6 \gamma_i x_i$,
where $\gamma=[0.8,\; 0.4, \; 0.5, \; 0.2,\; 0.3, \;0.1]$, the utilities of units are in the form of $U_i(x_i)=\a_i \log(x_i)$,
where $\a=[0.9,\; 0.7,\; 0.6,\; 0.8,\; 0.2,\; 0.4] $, and the unit cost factor is $\b=3$ for all six units. Each unit
starts the iteration with an initial investment of $x_i=0.5 \; \forall i $ and receives an initial incentive
factor of $p_i=0.3 \; \forall i$. The measurement units of the budget $B$ and investments $x$ are assumed
to be on the order of millions of dollars. The step-size constants are chosen as $\kappa_d=0.05$ and $\phi=0.3$.
The success of the project is decided by  whether the objective function passes minimum threshold of $2.5$, i.e.
$F(x^*)>2.5$.

The evolution of unit investment levels $x(n)$ is shown in Figure~\ref{fig:xinvest1} and the associated
incentive factors $p(n)$ in Figure~\ref{fig:incentfactor1}. The first unit, which contributes the most to the objective
receives a higher amount of aid from the risk manager than others. The algorithm converges fast, in $10-15$ steps,
for the given parameters, as indicated by the exponential convergence of its continuous-time counterpart. For a time interval of $1-2$ weeks per iteration, this corresponds to $3-6$ months
in practice. Although this convergence time may seem as a disadvantage at the first glance, in a practical 
project with highly varying parameters, such an online algorithm may even be beneficial in terms of adaptability
over time.
\begin{figure}[htp]
  \centering
  \includegraphics[width=0.7\columnwidth]{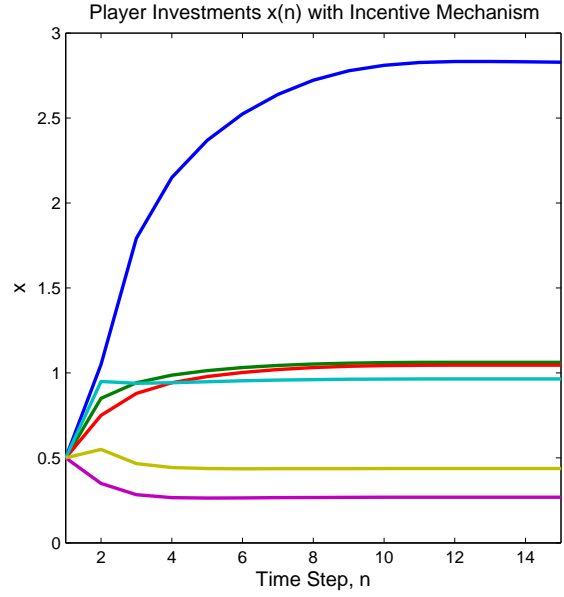}
  \caption{The evolution of unit investment levels $x(n)$ under Algorithm~\ref{alg:iterative1}. }
\label{fig:xinvest1}
\end{figure}
\begin{figure}[htp]
  \centering
  \includegraphics[width=0.7\columnwidth]{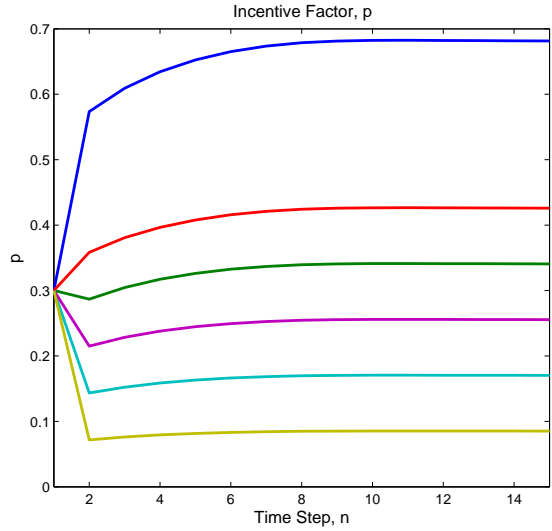}
  \caption{The evolution of incentive factors $p(n)$ under Algorithm~\ref{alg:iterative1}. }
\label{fig:incentfactor1}
\end{figure}
\begin{figure}[htbp]
  \centering
  \includegraphics[width=0.7\columnwidth]{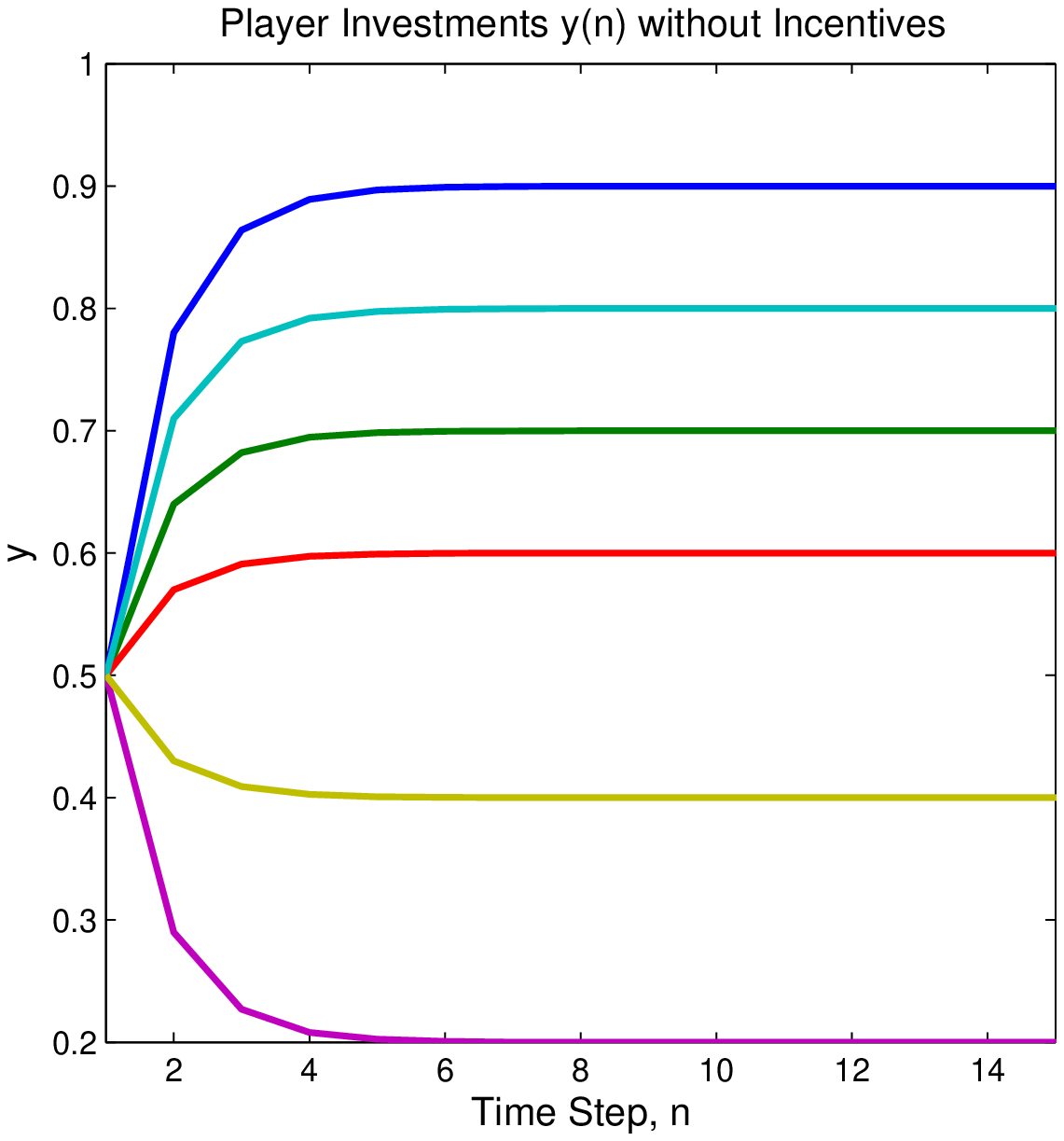}
  \caption{The evolution of unit investment levels $y(n)$ without any incentive mechanism implemented. }
\label{fig:yinvest1}
\end{figure}

In contrast, the investment results of units without any incentive mechanism in place, $y(n)$, is shown
in Figure~\ref{fig:yinvest1}. A comparison of the objective function $F(x)$ with and without an incentive
mechanism is depicted in Figure~\ref{fig:ffunction1}. Naturally, this improvement comes at an expense
of the budget $B$ spent entirely by the risk manager.
\begin{figure}[htbp]
  \centering
  \includegraphics[width=0.7\columnwidth]{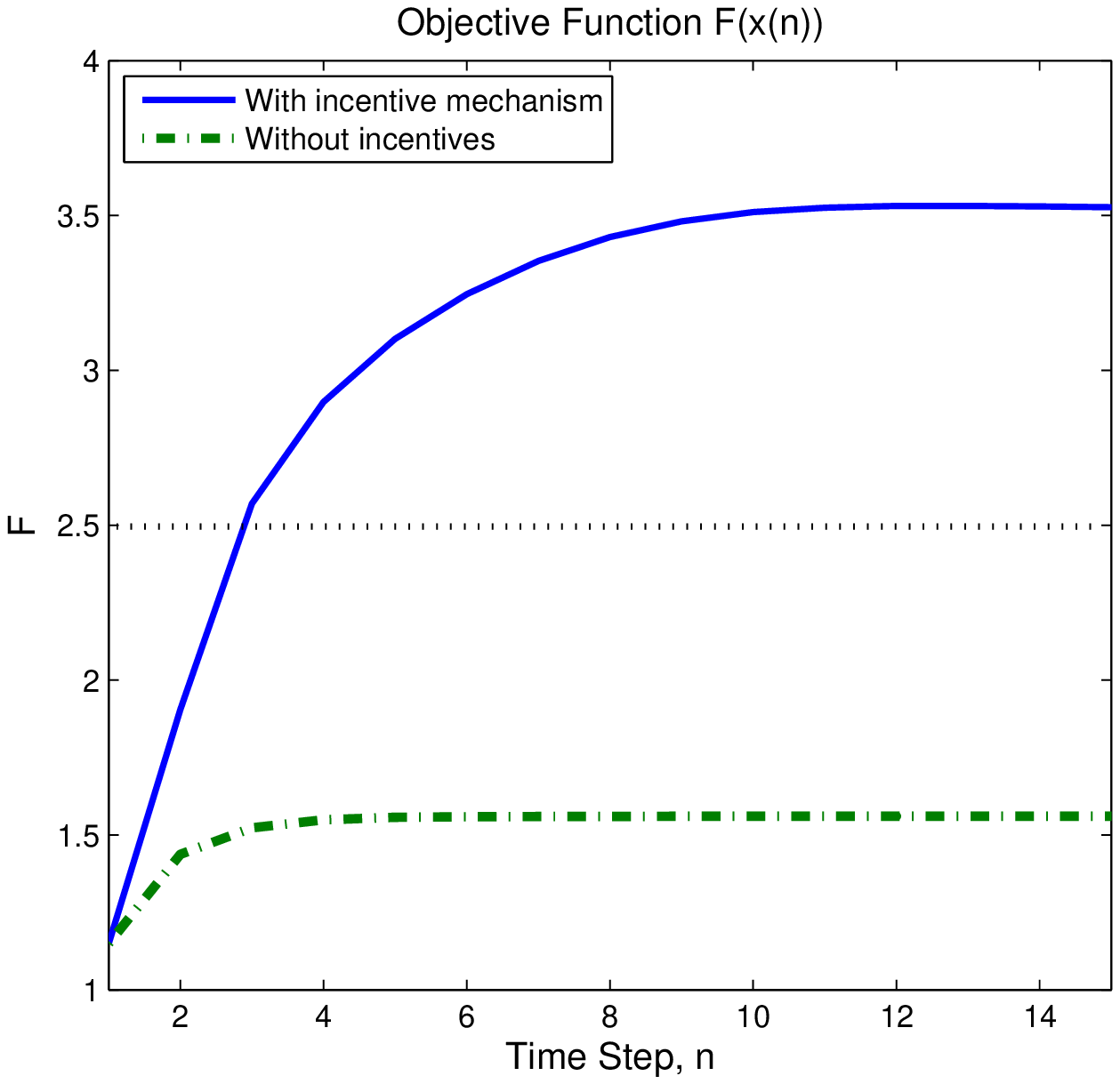}
  \caption{A comparison of the objective function $F(x(n))$ with and without an incentive
mechanism. Under the incentive mechanism it passes the success threshold of $2.5$.}
\label{fig:ffunction1}
\end{figure}

\section{Literature Review} \label{sec:discussion}

Building upon its successful applications to economics and engineering (e.g. networks), game theory has been recently utilized to model and analyze security problems \cite{alpcan-book}. Similar formalization efforts have been ongoing in the risk management area with the goal of developing analytical approaches to (security) risk analysis, management, and associated decision making \cite{riskbook1,crisis09,guikema1,icc10jeff}. Unsurprisingly, game theory enjoys an increased interest in the risk management community \cite{gtandrisk1,gtandrisk2,patecornell1,guikemachap}, as it provides a valuable and relevant mathematical framework \cite{alpcan-book,basargame,fudenberg}. Recently, a game theoretic approach has been developed for security and risk-related decision making and investments in \cite{miurako08-decision,miurako08-investment}.

Mechanism design \cite{maskin1,lazarSemret1998,johari1} is a field of game theory, where a designer imposes rules on the underlying strategic (noncooperative) game in order to achieve certain desirable objectives such as social welfare maximization or a system-wide goal. Hence, mechanism design can be viewed as a reverse engineering of games. It is especially useful in developing analytical frameworks for incentive mechanisms. Recently, there has been widespread interest in using mechanism design for modeling, analyzing and solving problems in network resource allocation problems that are decentralized in nature~\cite{johari1,hajek1,lazarSemret1998,rajiv1,wuWangLiuClancy2009,huangBerryHonig2006b}. It has also been applied to resource allocation in the context of engineering optimization~\cite{guikema2}. A basic game design approach to security investments in the risk management context has been discussed in \cite{alpcan-book}.

The presented incentive mechanism framework makes use of both mechanism design~\cite{maskin1,lazarSemret1998,johari1,alpcan-infocom10} and game theory~\cite{basargame,fudenberg}, which provide solid analytical and conceptual foundations. In contrast to many existing studies~\cite{holger-allerton,hurwicz1972,dasgupta} 
focusing on answering the question of ``which mechanisms are possible to design'', this work adopts a constructive
approach to develop a practical methodology and applies it to security risk management. Despite sharing the game-theoretic approach of earlier work \cite{miurako08-decision,miurako08-investment}, it distinguishes from these through the mechanism design framework developed on top of the game. A similar perspective has been briefly discussed in \cite[Chap. 6]{alpcan-book}, which however has not taken into account incentive-compatibility aspects. 

The article~\cite{guikema2}, which shares a similar goal as this one, discusses the problem of designing an allocation scheme that leads to truthful reporting by the engineers and allocation of the scarce resources within the VCG framework. This work distinguishes from~\cite{guikema2} in multiple ways in addition to its focus on risk management. First, the mechanisms discussed here are iterative, enable operation even under limited information, and do not require any direct revelation of preferences by the users or risk manager. Similar iterative schemes have been analyzed in depth in the networking literature, e.g. see \cite{hajek1,srikantbook,tansuphd}. Second, the sufficient conditions for convergence and operation of the iterative mechanisms here are not as restrictive as in~\cite{guikema2}. Finally, the properties of iterative interaction algorithms are analyzed rigorously from a dynamical system perspective and their rapid convergence is proven.

% \begin{itemize}
%  \item Multiple dimensions, e.g. handle manpower and monetary investments and subsidies separately.
%  \item Limitations of the utility model, motivation shifting utility curves may be interesting,
% manipulating opinions touching psychology and politics.
%  \item Some players entirely negative, how to handle them? [identification an option?]
%  \item Cite Asimov?
%  \item Abstract nature of the framework is actually an advantage in terms of widespread applicability
%  to diverse situations and organization types.
% \end{itemize}

\section{Discussion and Conclusion} \label{sec:conclusion}

The analytical incentive mechanism design framework presented can not only be used to derive guidelines for handling incentives in risk management but also to develop computer-assisted schemes. The abstract nature of the
framework is an advantage in terms of widespread applicability to diverse situations and organization types. In order to
satisfy all three objectives of efficiency, preference-compatibility and strategy-proofness, iterative incentive mechanisms 
and related algorithms are developed which also allow implementation under information limitations. These mechanisms are very straightforward to analyze and implement numerically, which is especially useful since any practical implementation of such incentive mechanism will most probably involve some kind of computer-assistance. The risk manager has then
the option to evaluate various scenarios through simulations before actual deployment. This is illustrated with a hypothetical deployment scenario and a numerical example.

The presented inventive mechanism framework can be extended in multiple directions.
One immediate extension is multiple decision variables. For example, units may need to
distinguish between monetary investments and local resources such as manpower. Similarly,
the risk manager may utilize multiple separate incentive factors. A related but more challenging
extension is multi-criteria decision making, where preferences are not simply expressed
through scalar valued functions such as $U$ and $F$. This is an open research area also
in decision and optimization theories.

The limitations of the utility-based approach adopted here is also worth noting. The expression
of preferences through specific (continuous, differentiable) functions is obviously a simplification
to facilitate devising analytically tractable models. However, as it can be seen in Sections~\ref{sec:iterative}
and \ref{sec:numerical}, the resulting algorithms do not necessarily require the players 
estimate their whole utility beforehand. A step-by-step iterative estimation process is 
fully sufficient to establish and communicate these preferences.

An underlying assumption of the model until now has been the fixed nature of player
preferences or utility functions. Under this assumption, the risk manager can influence
unit decisions only by introducing additive incentive factors to their cost structure
as discussed. In reality however, the unit preferences are open to changes through
psychological factors. The arts of persuasion and politics may ``shift'' the utility 
curves in the model. Quantification of such factors is obviously a significant
yet open research challenge.

An approach closely related to the strategic (noncooperative) game framework discussed in
this paper, is based on coalitional (cooperative) games \cite{fudenberg,WS00}. How to
motivate team building and cooperation in security and risk management has been recently discussed
in \cite{walid2} as well as in\cite[Chap. 6]{alpcan-book}. This alternative approach provides
a complementary and potentially very interesting research direction.

Some of the other open research directions follow directly from relaxing the
assumptions in Section~\ref{sec:model}. Improving the robustness of the 
incentive mechanisms against malicious units who do not follow the rules or
have utilities orthogonal to other users (sometimes referred to as adversarial mechanism
design) is an emerging and relevant research area. 
Detection of such misbehavior is also of both practical and theoretical interest.
In parallel to users, the relaxation of the assumption on risk manager's honesty leads
to similarly interesting questions such as how can a unit detect and respond to
misbehavior (e.g. unfairness) of the risk manager.

%\textbf{[TODO Mention also Adversarial mechanism design]}

% \section*{Acknowledgements}
% If you'd like to thank anyone, place your comments here
% and remove the percent signs.
%This work is supported in part by Deutsche Telekom Laboratories. The author wishes to thank Holger Boche, Lacra Pavel, and Siddharth Naik for helpful discussions.

\section*{Appendix}

\subsection*{%A1. \;
Existence and Uniqueness of Nash Equilibrium} 

This appendix revisits the analysis in \cite{tansuphd,rosen} on existence
and uniqueness of Nash equilibrium.

In the strategic game $\mc G$ given in Definition~\ref{def:game}, the strategy (decision) space of the players
is assumed to be convex, compact, and has a nonempty interior. Furthermore, the cost functions of the players, 
$J_i, \;\; i \in \mc A$, is strictly convex in $x_i$ and at least twice continuously differentiable due to 
its definition as well as those of utility functions $U_i, \;\; i \in \mc A$. Therefore, the game $\mc G$ 
admits (at least) a Nash equilibrium from Theorem 4.4 in~\cite[p.176]{basargame}.

Next, additional conditions are imposed such that the game $\mc G$
admits a unique NE solution.  Toward this end, define the pseudo-gradient operator
\begin{equation}\label{e:psgrad}
%  \pg J:= \left [\nabla_{x_1} J_1(\x)^T \cdots
%   \nabla_{x_M} J_M(\x)^T \right ]^T  := g(\x).
 \overline \nabla J:= \left [\partial J_1(x) / \partial x_1 \cdots
  \partial J_N(x) / \partial x_N \right ]^T  := g(x).
\end{equation}
Subsequently, let the $N \times N$ matrix $G(x)$ be the Jacobian of $g(x)$
with respect to $x$:
\begin{equation}\label{e:g1}
 G(x):=
   \begin{pmatrix}
   b_1 & a_{12} & \cdots & a_{1N} \\
   \vdots &   & \ddots & \vdots \\
   a_{N1} & a_{N2}   & \cdots     & b_N \
 \end{pmatrix} \;,
\end{equation}
where $b_i$ and $a_{ij}$ are defined as
$b_i:=\frac{\partial^2 J_i(x)}{\partial x_i^2}$ and
$a_{i,j} :=  \frac{\partial^2 J_i(x)}{\partial x_i \partial x_j}$,
respectively. 
\begin{assm} \label{assm3}
The symmetric matrix $G(x)+G(x)^T$, where $G(x)$ is defined in~(\ref{e:g1}),
is positive definite, i.e. $G(x)+G(x)^T >0$ for all $x \in \mc X$.
\end{assm}

\begin{assm} \label{assm4}
The strategy space $\mc X$ of the game $\mc G$ can be described as 
\begin{equation} \label{e:xdef}
 \mc X:=\{x \in \Real^N\,:\, h_j(x) \leq 0 ,\; j=1,\,2,\,\ldots r\},
\end{equation} 
where $h_j: \Real^N \rightarrow \Real, j=1,\,2,\,\ldots r$, $h_j(x)$ is convex in its arguments
for all $j$, and the set $\mc X$ is bounded and has a nonempty interior. In addition, the derivative of 
at least one of the constraints with respect to $x_i$,
$\{d h_j(x) / d x_i ,\; j=1,\,2,\,\ldots r\}$, is nonzero for $i=1,\,2,\,\ldots N$, $\forall x \in \mc X$.
\end{assm}

Now, revisiting the analysis in \cite{tansuphd,rosen}, it is shown that the game  $\mc G$ admits
a unique Nash equilibrium under Assumptions~\ref{assm3} and \ref{assm4}.

In view of Assumption~\ref{assm4}, the Lagrangian function for player $i$ 
in this game is given by
\begin{equation} \label{e:lagrangian}
L_i(x,\mu)=J_i(x)+\sum_{j=1}^r \mu_{i, j} h_j(x) ,
\end{equation} 
where $\mu_{i, j}\geq 0,\;  j=1,\,2,\,\ldots r$ are the Lagrange multipliers of
player $i$~\cite[p. 278]{bertsekas2}.  
We now provide a proposition for the  game  $\mc G$   with conditions similar 
to the well known Karush-Kuhn-Tucker necessary conditions 
(Proposition 3.3.1, p. 310,~\cite{bertsekas2}).

\begin{prop} \label{kuhntucker}
Let $x^*$ be a NE point of the game $\mc G$ and Assumptions~\ref{assm3}-\ref{assm4}
hold. There exists then a unique set of Lagrange multipliers, 
$\{\phi_{i, j}:\;  j=1,\,2,\,\ldots r ,\; i=1,\,2,\,\ldots N \} $, such that
$$ \begin{array}{r}
 \dfrac{d L(x^*,\phi)}{d x_i}= \dfrac{d J_i(x^*)}{d x_i}+
 \displaystyle { \sum_{j=1}^r \phi_{i, j}^*\dfrac{d h_j(x^*)}{d x_i}} =0,\\ 
 i=1,\,2,\,\ldots N, \\
\phi_{i, j} \geq 0, \;\; \forall i, j,  \text{ and } \;
\phi_{i, j} = 0, \;\; \forall j \notin A_i(\x^*), \forall i\, ,
\end{array}
$$
where $A_i(x^*)$ is the set of active constraints in $i^{th}$ player's minimization
problem at NE point $x^*$.
\end{prop}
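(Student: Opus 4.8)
The plan is to reduce the statement to the classical Karush--Kuhn--Tucker theorem applied separately to each player's optimization problem at the equilibrium point. By Definition~\ref{def:ne}, a NE $x^*$ has the property that, for every player $i$, the scalar $x_i^*$ solves the single-variable program $\min_{x_i} J_i(x_i,x_{-i}^*)$ subject to $x\in\mc X$, with the remaining components frozen at $x_{-i}^*$. Since $J_i$ is strictly convex in $x_i$ and each $h_j$ is convex by Assumption~\ref{assm4}, this is a convex program, so the KKT conditions are necessary once a constraint qualification is verified. First I would record that $x^*$ is well defined and unique, so that the multipliers to be constructed are attached to a single, unambiguous point: existence follows from the fixed-point argument (Theorem~4.4 of \cite{basargame}) and uniqueness from the diagonal strict convexity condition $G(x)+G(x)^T>0$ of Assumption~\ref{assm3} via Rosen's theorem (cf. \cite{rosen}).

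Next I would verify the constraint qualification for each player's subproblem. The nonempty interior of $\mc X$ together with convexity of the $h_j$ yields Slater's condition, and the requirement in Assumption~\ref{assm4} that at least one $d h_j(x)/d x_i$ be nonzero at every $x$ excludes the degenerate case in which all active-constraint directional derivatives vanish in the $x_i$-direction. Under this qualification the necessary KKT conditions (Proposition~3.3.1 of \cite{bertsekas2}) apply to the $i$-th subproblem and produce multipliers $\phi_{i,j}\geq 0$ satisfying the stationarity identity $d J_i(x^*)/d x_i + \sum_{j=1}^r \phi_{i,j}\, d h_j(x^*)/d x_i = 0$ together with complementary slackness, i.e. $\phi_{i,j}=0$ for every $j\notin A_i(x^*)$. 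Assembling these multiplier sets over $i=1,\ldots,N$ produces the full collection claimed in the statement.

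The step I anticipate as the main obstacle is \emph{uniqueness} of the multipliers, since the classical theorem supplies only their existence; uniqueness hinges on a linear-independence-type constraint qualification rather than on Slater feasibility alone. Here the scalar nature of each decision variable is decisive. Player $i$'s stationarity condition is a single scalar equation, and complementary slackness reduces the unknowns to the multipliers of the active constraints $A_i(x^*)$. In one dimension the directional derivatives $d h_j(x^*)/d x_i$ are scalars, so any two active-constraint gradients are automatically collinear; the linear-independence qualification can therefore hold only if effectively one constraint binds in the $x_i$-direction at $x_i^*$. I would consequently argue that the nonvanishing-derivative requirement of Assumption~\ref{assm4}, together with the description of the feasible set as an intersection of convex sublevel sets whose boundary the scalar optimum meets at a single endpoint, forces a unique binding active constraint with nonzero $x_i$-derivative; the scalar stationarity equation then determines its multiplier uniquely, while complementary slackness sets the remaining $\phi_{i,j}$ to zero. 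Carefully making this reduction rigorous, and confirming that the stated hypotheses exactly exclude a continuum of admissible multipliers, is where the essential work of the proof concentrates.
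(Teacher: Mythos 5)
Your overall route coincides with the paper's: the paper's proof is a two-sentence sketch that applies the penalty-function proof of Proposition~3.3.1 of \cite{bertsekas2} separately to each player's scalar problem $\min_{x_i} J_i(x_i,x_{-i}^*)$ at the fixed NE point, which is exactly your per-player reduction. Your existence step (convexity of each subproblem, a constraint qualification from the nonempty interior and the nonvanishing-derivative clause of Assumption~\ref{assm4}, then the classical necessary conditions with complementary slackness) is sound and in fact more explicit than what the paper writes. The preliminary digression on existence and uniqueness of the NE itself is harmless but unnecessary, since the proposition takes $x^*$ as given.

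The genuine gap is in the uniqueness step, which you rightly identify as the crux. Your argument needs the conclusion that exactly one constraint binds with nonzero $x_i$-derivative, but Assumption~\ref{assm4} only guarantees that \emph{at least one} of the derivatives $d h_j(x)/d x_i$ is nonzero; it does not exclude two constraints being simultaneously active at $x^*$ with nonzero $x_i$-derivatives of the same sign (take $h_1(x)=x_i-1$ and $h_2(x)=2x_i-2$, both active at $x_i^*=1$). In that case the single scalar stationarity equation $d J_i(x^*)/d x_i+\sum_{j\in A_i(x^*)}\phi_{i,j}\,d h_j(x^*)/d x_i=0$ admits a whole segment of nonnegative multiplier vectors, and uniqueness fails. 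What actually delivers uniqueness in Proposition~3.3.1 of \cite{bertsekas2} is its regularity hypothesis (linear independence of the active constraint gradients), which for a scalar decision variable forces $\abs{A_i(x^*)}\leq 1$ with a nonzero derivative whenever a constraint is active; the paper inherits that hypothesis silently through the citation rather than deriving it from Assumption~\ref{assm4}. So your instinct about where the difficulty lies is correct, but the claim that "the stated hypotheses force a unique binding active constraint" is asserted rather than proved, and it cannot be proved from Assumption~\ref{assm4} alone: closing the argument requires adding a per-player regularity condition, or acknowledging (as the paper implicitly does) that it is being assumed via the reference.
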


\begin{proof}
The proof essentially follows lines similar to the ones of the
Proposition 3.3.1 of~\cite{bertsekas2}, where the penalty approach
is used to approximate the original constrained problem by
an unconstrained problem that involves a violation of the
constraints. The main difference here is the
repetition of this process for each individual $x_i$ at the NE point $x^*$.
\end{proof}

Define now a more compact notation the vector of Lagrangian functions as 
$L:=[L_1,\ldots,L_N]$, and
the $N \times N$ diagonal matrix of Lagrange multipliers for the $j^{th}$ 
constraint as $\Phi_j=\rm{diag} [\phi_{1,j}, \phi_{2,j}, \ldots \phi_{N,j}]$. 

% Finally, we obtain the main existence and uniqueness result:
% \begin{thm} \label{thm:ne}
% There exists a unique NE point in the M-player noncooperative
% game $\Gamma$ if Assumptions~\ref{assm2}, \ref{assm3}, and~\ref{assm4}
% hold.
% \end{thm}
% \begin{proof}
% Existence of the NE follows from Proposition~\ref{existence}.

By Proposition~\ref{kuhntucker} and Assumption~\ref{assm4}, a NE point $x^{(1)}$ satisfies
\begin{equation} \label{e:proofunique1}
 \overline \nabla L(x^{(1)},\Phi^{(1)})=g(x^{(1)})+\sum_{j=1}^r \Phi_j^{(1)} 
\overline \nabla h_j(x^{(1)})=0,
\end{equation}
where $\Phi_j^{(1)}\geq 0$ is unique for each $j$.
Assume there are two different NE points $x^{(0)}$ and $x^{(1)}$. Then, one can also
write the counterpart of (\ref{e:proofunique1}) for $x^{(0)}$. 
Following an argument similar to the one in the proof of Theorem 2 in~\cite{rosen},
one can show that this leads to a contradiction. We present a brief outline of a simplified
version of that proof for the sake of completeness. 

Multiplying (\ref{e:proofunique1}) and its counterpart for $x^{(0)}$ from left by
$(x^{(0)}-x^{(1)})^T$, and then adding them together, we obtain
\begin{equation} \label{e:contradict}
\begin{array}{rcl}
 0 & = & (x^{(0)}-x^{(1)})^T \overline \nabla L(x^{(1)},\Phi^{(1)}) \\
  & & +  \left( \overline \nabla L(x^{(1)},\Phi^{(1)})\right)^T (x^{(0)}-x^{(1)}) \\
  & & + (x^{(1)}-x^{(0)})^T \overline \nabla L(x^{(0)},\Phi^{(0)})  \\ \\
  &= & (x^{(0)}-x^{(1)})^T \left( g(x^{(1)})-  g(x^{(0)}) \right) \\
  & & + \left( g(x^{(1)})-  g(x^{(0)}) \right)^T (x^{(0)}-x^{(1)}) \\
  & & +  (x^{(1)}-x^{(0)})^T \sum_{j=1}^r  [\Phi_j^{(1)} \overline \nabla h_j(x^{(1)}) \\
  & & - \Phi_j^{(0)} \overline \nabla h_j(x^{(0)})] .\\
\end{array}
\end{equation}

Define the strategy vector $x(\theta)$ as a
convex combination of the two equilibrium points $x^{(0)}\,,\,x^{(1)} $ :
$$
  x(\theta)=\theta x^{(1)}  + (1-\theta) x^{(0)}  ,
$$
where $0<\theta<1$. Take the derivative of $g(x(\theta))$ with respect to $\theta$,
\begin{equation}\label{e:gdiff}
  \dfrac{dg(x(\theta))}{d\theta}=G(x(\theta)) \frac{dx(\theta)}{d\theta}=G(x(\theta))(x^{(1)} -x^{(0)}),
\end{equation}
where $G(x)$ is defined in~(\ref{e:g1}). Integrating~(\ref{e:gdiff}) over
$\theta$ yields
\begin{equation}\label{e:gintegral}
  g(x^{(1)})-g(x^{(0)})=\left[\int_0^1 G(x(\theta)) d\theta \right](x^{(1)}-x^{(0)} ) .
\end{equation}
Multiplying (\ref{e:gintegral})
from left by $(x^{(1)}-x^{(0)} )^T$, the transpose of (\ref{e:gintegral}) from right by
$(x^{(1)}-x^{(0)} )$, and adding these two terms yields
\begin{equation}\label{e:gintegral2}
  (x^{(1)}-x^{(0)})^T \left[\int_0^1 G(x(\theta))+G^T(x(\theta)) d\theta \right](x^{(1)}-x^{(0)} ) .
\end{equation}
Since $G(x(\theta))+G^T(x(\theta)) $ is positive definite by Assumption~\ref{assm3}
and the sum of two positive definite matrices is positive definite,
the matrix $\bar G:=\int_0^1 G(x(\theta))+G^T(x(\theta))  d\theta$ is positive definite.

Similarly, we have
\begin{equation}\label{e:hdiff}
  \dfrac{d \overline \nabla h(x(\theta))}{d\theta}=H(x(\theta)) \frac{dx(\theta)}{d\theta}=H(x(\theta))(x^1-x^0),
\end{equation}
where $H(x)$ is the Jacobian of $\overline \nabla h(x)$ and positive definite due to convexity of $h(x)$
by definition. The third term in (\ref{e:contradict})
$$ \begin{array}{r}
(x^{(0)}-x^{(1)} )^T \sum_{j=1}^r [\Phi_j^{(0)}\overline  \nabla h_j(x^{(0)})-\Phi_j^{(1)} 
  \overline  \nabla h_j(x^{(1)})],
\end{array}
$$
is less than 
$$ \sum_{j=1}^r [\Phi_j^{(1)}-\Phi_j^{(0)}] [h_j(x^{(1)})-h_j(x^{(0)})] ,$$
due to convexity of $h(x)$. Since for each constraint $j$, $h_j(x) \leq 0\; \forall x$, $\Phi_j^{(i)} h_j(x^{(i)})=0,\; i=0, 1$,  and
$\Phi_j$ is positive definite, where the latter two follow from Karush-Kuhn-Tucker
conditions, this term is also non-positive.

The sum of the first two terms in (\ref{e:contradict}) are the negative of (\ref{e:gintegral2}), which
is strictly positive for all $x^{(1)} \neq x^{(0)}$. Hence, (\ref{e:contradict})
is strictly negative which leads to a contradiction unless $x^{(1)}=x^{(0)}$. Thus, there exists a unique NE point in the  game $\mc G$.

% % -------------------------------------------------------------------
\bibliographystyle{IEEEtranS}
%\bibliographystyle{spmpsci}      % mathematics and physical sciences
%\bibliographystyle{plain}
%\bibliography{riskbiblio1}

\end{document}